%$Id: figuide.tex,v 2.0 2016/11/04 by szczuka@mimuw.edu.pl$
\documentclass{fundam}
\usepackage{url} % takes care of hyperlinks, preferred over hyperref
\usepackage{graphicx}% allows for inclusion of graphic files (figures)
\usepackage{mathrsfs}
\usepackage{amssymb}
\usepackage{amsmath}
\usepackage{algpseudocode}

%%%% BEGIN LOCAL MACROS %%%%
\RequirePackage{xspace}
\let\lg\relax
\DeclareMathOperator{\lg}{ld}

\DeclareMathOperator{\bin}{bin}
\DeclareMathOperator{\cod}{cod}
\newcommand{\dfa}{\textsc{dfa}\xspace}
\newcommand{\nfa}{\textsc{nfa}\xspace}

\newcommand\subtext[1]{\mathrm{#1}}
\newcommand\set[1]{\{#1\}}
\newcommand\sett[2]{\{\,#1\ \mid\ #2\,\}}
\newcommand{\trans}{\textsc{iufst}}
\newcommand{\htrans}{\textsc{-iufst}}
\newcommand{\ntrans}{\textsc{niufst}}
\newcommand{\hntrans}{\textsc{-niufst}}

\newcommand{\lba}{\textsc{lba}\xspace}

\newcommand{\cupdot}{\mathbin{\mathaccent\cdot\cup}}
\newcommand{\blank}{\raisebox{\depth}{\texttt{\char 32}}}

\newcommand{\dollar}{\texttt{\$}}

\newcommand{\border}{\texttt{\#}}
\newcommand{\rightend}{\mathord{\vartriangleleft}}
\newcommand{\leftend}{\mathord{\vartriangleright}}
\overfullrule4pt
%%%% END LOCAL MACROS %%%%

 % \usepackage{hyperref}

\begin{document}

\setcounter{page}{337}
\publyear{22}
\papernumber{2113}
\volume{185}
\issue{4}

 \finalVersionForARXIV
 % \finalVersionForIOS

\title{Computational and Descriptional Power of Nondeterministic \\
                      Iterated Uniform Finite-State Transducers\footnote{A preliminary version of this work was
                 presented at the \emph{16th Conference on Computability in Europe (CiE 2020), Fisciano,
                 Italy, June 29th -- July 2nd, 2020, and it is published in
                 {\rm \cite{kutrib:2020:dniufstcdp:proc}}.}}  }

\author{Martin Kutrib\thanks{Address of correspondence:  Institut f\"ur Informatik, Universit\"at Giessen,
                                Arndtstr.~2, 35392 Giessen, Germany.\newline \newline
          \vspace*{-6mm}{\scriptsize{Received June 2022; \ accepted May  2022.}}}, Andreas Malcher
          \\
  Institut f\"ur Informatik, Universit\"at Giessen\\
  Arndtstr.~2, 35392 Giessen, Germany\\
   \{kutrib, andreas.malcher\}@informatik.uni-giessen.de
\and Carlo Mereghetti, Beatrice Palano\\
  Dip. di Informatica ``G.\ degli Antoni'',  Universit\`a degli Studi di Milano\\
  via Celoria~18, 20133 Milano, Italy\\
 \{carlo.mereghetti, beatrice.palano\}@unimi.it
 }

\runninghead{M.~Kutrib et al.}{Iterated Uniform Finite-State Transducers}

\maketitle

\vspace*{-8mm}
\begin{abstract}
An iterated uniform finite-state transducer ($\trans$) runs the same length-preserving
transduction, starting with a sweep on the input string and then iteratively sweeping
on the output of the previous sweep.
The $\trans$ accepts the input string by halting in an accepting state at the
end of a sweep.
We consider both the deterministic ($\trans$) and nondeterministic ($\ntrans$) version
of this~device. We show that constant sweep bounded $\trans$s and $\ntrans$s accept all
and only regular languages. We study the state complexity of removing nondeterminism as well as
sweeps on constant sweep bounded $\ntrans$s, the descriptional power of constant sweep
bounded $\trans$s and $\ntrans$s with respect to classical models of finite-state automata,
and the computational complexity of several decidability questions.
Then, we focus on non-constant sweep bounded devices, proving the existence of a proper
infinite nonregular language hierarchy depending on the sweep complexity both in the
deterministic and nondeterministic case.
Though $\ntrans$s are ``one-way'' devices we show that they characterize
the class of context-sensitive languages, that is, the complexity class $\mathsf{DSpace}(\text{lin})$.
Finally, we show that the nondeterministic devices are more powerful than their
deterministic variant for a sublinear number of sweeps that is at least
logarithmic.

\medskip\noindent
\textbf{Keywords:}
Iterated transducers, nondeterminism, descriptional complexity, sweep complexity, language hierarchies.
\end{abstract} %\end{keywords}

\section{Introduction}\label{sec:intro}\vspace*{-1mm}

The notion of an iterated uniform finite-state transducer ($\trans$) has been
introduced in~\cite{KMMP19} (see~\cite{kutrib:2022:dciufst} for the journal version) and
can be described as a finite transducer that iteratively
sweeps from left to right over the input tape while performing the same length-preserving
transduction.
In particular, the output of the previous sweep is taken as input for every new sweep.
This model is motivated by typical applications of transducers or cascades of transducers,
where the output of one transducer is used as the input for the next transducer.
For example, finite state transducers are used for the lexical analysis of computer programs
and the produced output is subsequently processed by pushdown automata for the syntactical
analysis.
In~\cite{Friburger:fstcteneit:2004}, cascades of finite-state transducers are used in
natural language processing.
Another example is the Krohn-Rhodes decomposition theorem which shows that every regular
language is representable as the cascade of several finite state transducers, each one
having a ``simple'' algebraic structure~\cite{Ginzburg:atoa:1968,Hartmanis:1966:astosm:book}.
Finally, it is shown in~\cite{Citrini:odmpa:1986} that cascades of deterministic pushdown
transducers lead to a proper infinite hierarchy in between the deterministic context-free
and the deterministic context-sensitive languages with respect to the number of transducers
involved.

In contrast to all these examples and other works in the literature (see, e.g.,
\cite{Bordihn:istalgd,manca:2001:gpit,pierce:2011:dpoilpt}), where the subsequently applied
transducers are in principle different and not necessarily length-preserving,
the model of $\trans$s introduced in~\cite{KMMP19} requires that the same transducer
is applied in every sweep and that the transduction is deterministic and length-preserving.
More precisely, an $\trans$ works in several sweeps on a tape which initially contains the
input string concatenated with a right endmarker.
In every sweep, the finite state transducer starts in its initial state at the first
tape cell, is applied to the tape, and prints its output on the tape.
The input is accepted, if the transducer halts in an accepting state at the end of a sweep.
In~\cite{KMMP19}, $\trans$s both with a constant number and a non-constant number of sweeps
are investigated. In the former case, it is possible to characterize exactly the set
of regular languages and upper and lower state bounds for converting $\trans$s into
deterministic finite automata ($\dfa$s) and vice versa are established.
Furthermore, as always done for several models (see, e.g.,
\cite{bednarova:2012:scbochdpda,BMP10,BMP17,BGMP13,DBLP:journals/jcss/BednarovaGMP17,JMMP13}),
the state complexity of language operations,
that is, the costs in terms of the number of states needed for union, intersection,
complementation, and reversal, is investigated in depth. Finally, the usually studied
decidability questions such as emptiness, finiteness, equivalence, and inclusion are
proved to be {\sf NL}-complete, showing that these questions have the same computational
complexity as for $\dfa$s.
For the case of a non-constant number of sweeps, the situation is quite different.
It is shown that a logarithmic number of sweeps is sufficient to accept unary non-semilinear
languages, while with a sublogarithmic number of sweeps only regular languages can be
accepted.
Moreover, the existence of a finite hierarchy with respect to the number of sweeps is
obtained. Finally, all usually studied decidability questions are shown to be undecidable
and not even semidecidable for $\trans$s performing at least a logarithmic number~of~sweeps.

In this paper, we enhance the model of $\trans$s by {\em nondeterminism},
thus obtaining their \emph{nondeterministic} version ($\ntrans$s).
As in \cite{KMMP19}, we are interested in
$\ntrans$s exhibiting both a constant and non-constant number of sweeps.

Constant sweep bounded $\ntrans$s are proved to accept exactly regular languages.
So, their ability of representing regular languages in a very succinct way turns out to
be worth investigating, as well as comparing such an ability with that of other more
traditional models of finite-state automata.
This type of investigation, whose importance is witnessed by a well consolidated trend in
the literature, focuses on the number of states for representing languages and
belongs to the area of descriptional complexity.
Being able to have ``small'' devices representing/accepting certain languages, leads to
relevant consequences either from a practical point of view (less hardware needed to
construct such devices, less energy absorption, less cooling problems, {\em etc.}), and
from a theoretical point of view (higher manageability of proofs and representations for
languages, reductions of difficult problems on general computing devices to the same
problems on simpler machines, {\em etc.}).
The reader is referred to, e.g.,~\cite{holzer:2010:dcais}, for a thoughtful survey on
descriptional complexity and its consequences.

Non-constant sweep bounded $\ntrans$s are then studied for their computational power,
i.e., the ability of accepting language families. In particular, such an ability is
related to the number of sweeps as a function of the input~length.

After defining $\ntrans$s in Section~\ref{sect:def},
we discuss in detail an example that demonstrates the state number advantages of $\ntrans$s with a
constant number of sweeps in comparison with its deterministic variant and the
classical models of deterministic and nondeterministic finite automata ($\nfa$s).
Precisely, we exhibit a language accepted by an $\ntrans$ such that any equivalent $\trans$
requires exponentially more states and sweeps, while any equivalent \nfa\ (resp., \dfa)
requires exponentially (resp., double-exponentially) more states.

In Section~\ref{sect:constant}, we consider $\ntrans$s with a constant
number of sweeps in more generality. By evaluating the state cost of sweep removal, we
show that any $\ntrans$ featuring~$n$ states and~$k$ sweeps can be simulated by
a~$2n^k$-state \nfa, and hence by a $2^{2n^k}$-state \dfa\ as well.
Next, we exhibit a unary (resp., binary) language to establish lower bounds for
the obtained size blow-up
for turning a constant sweep $\ntrans$ into an equivalent \nfa\ (resp., \dfa).
Moreover, we study the computational complexity of several decidability
questions for $\ntrans$ with $k$ sweeps and obtain {\sf NL}-completeness results
for the questions of emptiness, finiteness, and infiniteness, whereas the questions
of universality, inclusion, and equivalence are shown to be {\sf PSPACE}-complete.

In the last two sections, we consider $\ntrans$s with a non-constant
number of sweeps. First, we establish in Section~\ref{sec:infinite-hierarchy} an infinite
proper hierarchy with respect to the number of sweeps. Interestingly, this result also
extends the known finite hierarchy in the deterministic case to an infinite
hierarchy. Then we show that $\ntrans$s can simulate linear bounded automata,
though $\ntrans$s are ``one-way'' devices where the information flow is from
left to right only. So, $\ntrans$s whose sweep complexity is not bounded a
priori characterize the class of context-sensitive languages, that is, they
capture the complexity class $\mathsf{DSpace}(\text{lin})$.

Finally, we study in Section~\ref{sec:ndet-beats-det} the question of whether the
nondeterministic model is more powerful than the deterministic model. We get that the
question can be answered in the affirmative if at least a logarithmic number of sweeps is
provided. Moreover, we show that nondeterminism cannot be matched in power by the
deterministic paradigm even if a sublinear number of sweeps is~given.

\section{Definitions and preliminaries}\label{sect:def}

We denote the set of positive integers and zero by ${\mathbb{N}}$.
Set inclusion is denoted by~$\subseteq$ and strict set inclusion by $\subset$.
Given a
set~$S$, we write~$2^{S}$ for its power set and~$|S|$ for its cardinality.
Let~$\Sigma^*$ denote the set of all words over the finite
alphabet~$\Sigma$.  The \emph{empty word} is denoted by~$\lambda$, and
$\Sigma^+ = \Sigma^* \setminus \{\lambda\}$.  The length of a word $w$ is denoted
by $|w|$. By $\lg n$ we denote the logarithm of~$n$ to base~$2$.

Roughly speaking, an iterated uniform finite-state transducer is a
finite-state transducer which processes the input
in multiple passes (also sweeps). In the first pass it reads the
input word followed by an endmarker and emits an output
word. In the following passes it reads the output word of
the previous pass and emits a new output word.
It can be seen as a restricted variant of a one-tape Turing machine.
The number of passes taken, the \emph{sweep complexity},
is given as a function of the length of the input.
Here, we are interested in weak processing devices: we will
consider length-preserving finite-state transducers,
also known as Mealy machines \cite{Mealy:1955:amfssc}, to be iterated.

\medskip
Formally, we define a \emph{nondeterministic iterated uniform finite-state transducer}
($\ntrans$) as a
system $T=\langle Q,\Sigma,\Delta,q_0,\rightend,\delta,F\rangle$, where:
\begin{itemize}
\item $Q$ is the finite set of \emph{internal states},
\item $\Sigma$ is the set of \emph{input symbols},
\item $\Delta$ is the set of \emph{output symbols},
\item $q_0\in Q$ is the initial state,
\item $\rightend\in\Delta\setminus \Sigma$ is the \emph{endmarker},
\item $F\subseteq Q$ is the set of \emph{accepting states},
\item $\delta\colon Q\times(\Sigma\cup\Delta)\to 2^{Q\times \Delta}$
is the partial \emph{transition function}.
\end{itemize}

The $\ntrans$ $T$ \emph{halts} whenever the transition function is undefined
or whenever it enters an accepting state at the end of a sweep.
Since the transducer is applied in multiple passes, that is,
in any but the initial pass it operates on an output of the previous pass,
the transition function depends on input symbols from
$\Sigma\cup\Delta$.
We denote by~$T(w)$ the set of possible outputs produced by $T$ in a complete
sweep on input $w \in (\Sigma\cup\Delta)^*$.
During a computation on input $w\in\Sigma^*$, the $\ntrans$ $T$
produces a sequence of words $w_1,\ldots,
w_i,w_{i+1},\ldots\in(\Sigma\cup\Delta)^*$, where
$w_1\in T(w\rightend)$ and \mbox{$w_{i+1}\in T(w_i)$} for~$i\geq 1$.

An $\ntrans$ is said to be \emph{deterministic} ($\trans$)
if and only if $|\delta(p,x)|\leq 1$, for all $p\in Q$ and
$x\in\Sigma\cup\Delta$. In this case, we simply write
\mbox{$\delta(p,x)=(q,y)$} instead of
$\delta(p,x)=\{(q,y)\}$ assuming that the transition function is a
mapping \mbox{$\delta\colon Q\times(\Sigma\cup\Delta)\to {Q\times \Delta}$.}

Now we turn to language acceptance. With respect to nondeterministic
computations and some complexity bound,
in the literature several acceptance modes are considered. For example,
a machine accepts a language in the \emph{weak mode}, if for any input $w\in
L$ there is an accepting computation that obeys the complexity bound.
Language $L$ is accepted in the \emph{strong mode}, if the machine
obeys the complexity bound for all computations (accepting or not) on all
inputs.
Here we deal with the number of sweeps as (computational) complexity measure.
The weak mode seems too optimistic for this measure, while the
strong mode seems too restrictive. Therefore, here we consider an intermediate mode,
the so-called accept mode. A language is accepted in the \emph{accept mode}
if all accepting computations obey the complexity bound
(see~\cite{mereghetti:2008:tdcpstm} for separation of these modes with respect
to space complexity).

A computation is halting if there exists an $r\geq 1$ such that
$T$ halts on $w_{r}$, thus performing $r$ sweeps.
The input word $w\in\Sigma^*$ is \emph{accepted} by $T$ if at least one
computation on $w$ halts at the end of a sweep in an accepting state.
That is, the initial input is a word over the input alphabet $\Sigma$ followed by
the endmarker, and there is an output computed after $r-1$ sweeps that
drives $T$ in a complete final sweep where it halts in an accepting state.
Otherwise, it is \emph{rejected}.
Note that the output of the last sweep is not used.
The language accepted by $T$ is the set $L(T)\subseteq\Sigma^*$ defined as
\mbox{$L(T)=\{\, w \in \Sigma^* \mid w \text{ is accepted by }T\,\}$.}

Given a function $s\colon\mathbb{N}\to \mathbb{N}$,
an iterated uniform finite-state transducer $T$ is said to be
of \emph{sweep complexity}~$s(n)$ if for all $w\in L(T)$
all accepting computations on $w$ halt after at most~$s(|w|)$ sweeps.
In this case, we add the prefix $s(n)\textsc{-}$ to the notation of the device.
It is easy to see that $1\htrans$s (resp., $1\hntrans$s) are essentially
deterministic (resp., nondeterministic) finite-state automata
($\dfa$s and $\nfa$s, respectively).

Throughout the paper, two accepting devices are said to be
\emph{equivalent} if and only if they accept the same language.

We chose to denote our transducers as ``uniform'' since they perform the same
transduction at each sweep: they always start from the same initial state on the leftmost
tape symbol, operating the same transduction rules at every computation step. Yet, we
quickly observe that an $\ntrans$  is clearly a restricted version of a linear bounded
automaton (see, e.g., \cite{Hopcroft:1979:itatlc:book}). So, any language accepted by an
$\ntrans$ is context sensitive and it will turn out that the converse is also
true.

\subsection*{Accepting languages by iterated transductions: an example}

In order to clarify the notion of acceptance by iterated transduction,
for any integer $k\geq 2$, we consider the block language
$$
B_k=\{\,u_1\border u_2\border\cdots\border u_m \mid u_i\in\set{0,1}^k,\, m> 1,\, \exists i< m\colon u_i=u_m\,\}.
$$
By counting arguments, it is not hard to see that at least $2^{2^k+1}$ states are necessary
to accept $B_k$ by a $\dfa$.
On the other hand, an exponentially smaller $\nfa$ $A$ may accept~$B_k$
as follows.
\begin{enumerate}
\item In a first phase, on each block in the input string, $A$ stores the block in its
finite control and then nondeterministically decides whether to keep the block or to ignore
it. Along this phase, the correct block structure of the input
scanned so far is checked by~$A$ as well. This phase takes $2^{k+1}$ states.

\item Once $A$ decides to keep a block, say $u$, in its finite control, a second phase
starts in which $A$ scans the rest of the input checking the correct block structure and
guessing another block $w$ to be matched symbol-by-symbol against $u$. If the matching is
successful and $w$ is the last block, then $A$ accepts.
This phase takes at most $2^{k+1}\cdot (k+1)$ states.
\end{enumerate}

Altogether, the \nfa\ $A$ features $2^{k+1}+2^{k+1}\cdot (k+1)=2^{k+1}\cdot (k+2)$ states.

\medskip
Indeed, $A$~can also be seen as a $2^{k+1}\cdot (k+2)$-state $1\hntrans$ which outputs the
scanned symbol at each step.
However, paying by the number of sweeps (see, e.g., \cite{malcher:2012:dctwpdarhr}), we can
build a $k\hntrans$ $T$ for $B_k$ with only $O(k)$ states.
Informally:
\begin{enumerate}
\itemsep=0.9pt
\item In a first sweep, $T$ checks the correct block structure of the input string,
nondeterministically chooses two blocks to be matched symbol-by-symbol, and compares the
first symbol of the two blocks by storing the first symbol of the first block in its finite
control and replacing these two symbols with a blank symbol.

\item At the $i$th sweep, $T$ checks the $i$th symbol of the two blocks chosen in the first
sweep by storing and blank-replacing symbols as explained at the previous point.
To distinguish the first sweep (where both nondeterministic block choices and symbol
comparisons take place) from the others (where only symbol comparisons take place),
a special symbol  can replace the first input symbol at the beginning of the first sweep.
\end{enumerate}

It is not hard to see that $O(k)$ states are needed to check the input formatting along the
first sweep, and that a constant number of states suffices to blank-replacing and comparing
input symbols. Indeed, after $k$ sweeps all nondeterministically chosen block symbols are
compared so that $T$ may correctly accept or reject.
This gives the claimed state and sweep bounds for $T$.

We remark that: (i) a $2^k(k+4)$-state $2^k\htrans$ is designed in~\cite{kutrib:2022:dciufst} for $B_k$,
(ii)~$2^{2^k+1}$ states are necessary to accept $B_k$ by a $\dfa$, and that
(iii)~$2n^k$ states are sufficient for a \dfa\ to simulate an $n$-state $k$\htrans\
\cite{kutrib:2022:dciufst}.
These facts, together with the above designed $O(k)$-states $k$\hntrans, show that $\ntrans$s
can be exponentially more succinct than $\trans$s either in the number of states, or in the
number of sweeps, or possibly both.
Indeed, we also have that $\ntrans$s can be exponentially more succinct than $\nfa$s and
double-exponentially more succinct than $\dfa$s.
\smallskip

In the next section, we approach more generally the analysis of the descriptional power of
$\ntrans$s with respect to their deterministic counterparts and classical finite-state
models.

\section{$\ntrans$s with a constant number of sweeps}\label{sect:constant}
\subsection{Reducing sweeps and removing nondeterminism}

Let us begin by showing how to reduce sweeps from $\ntrans$s and evaluate the state cost of
reduction. We will then use this construction to reduce constant sweep bounded $\ntrans$s
to one sweep $\ntrans$s, thus obtaining equivalent $\nfa$s whose number of states will be
suitably bounded.

\begin{theorem}\label{the:sw}
Let $n,k,i>0$ be integers and $i \le k$. Every $n$-state $k\hntrans$ (resp., $k\htrans$) can be converted
to an equivalent $2n^i$-state $\lceil\frac{k}{i}\rceil\hntrans$
(resp., $\lceil\frac{k}{i}\rceil\htrans$).
\end{theorem}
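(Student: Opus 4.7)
The plan is to simulate $i$ consecutive sweeps of the given $n$-state $k\hntrans$ $T$ within a single sweep of a new machine $T'$, so that $\lceil\frac{k}{i}\rceil$ sweeps of $T'$ reproduce all $k$ sweeps of $T$. The central observation is that at any tape cell, the symbol that the $(j+1)$-th simulated sweep would read is exactly the symbol that the $j$-th simulated sweep writes at that cell, so both events can be executed together inside a single transition of $T'$ by cascading $i$ simulated transitions on the fly.

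Concretely, I would take the states of $T'$ to be tuples in $Q^i$, each encoding the current control states of $i$ parallel simulations of $T$ at the current tape position. On reading a tape symbol $x$ in state $(q_1,\ldots,q_i)$, the new transition fires the cascade: nondeterministically pick $(q'_1,y_1)\in\delta(q_1,x)$, then $(q'_2,y_2)\in\delta(q_2,y_1)$, and so on up to $(q'_i,y_i)\in\delta(q_i,y_{i-1})$. The new state becomes $(q'_1,\ldots,q'_i)$ and the symbol written is $y_i$. The initial state is $(q_0,\ldots,q_0)$. An induction on the tape position shows that after one full sweep of $T'$ the tape holds exactly the word that $T$ would produce after $i$ consecutive sweeps along a fixed sequence of nondeterministic choices, so after $m$ sweeps of $T'$ the tape coincides with that of $T$ after $mi$ sweeps. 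In the deterministic case the cascade is forced by a single-valued $\delta$ and yields the analogous $\lceil\frac{k}{i}\rceil\htrans$.

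For acceptance, I would declare a tuple $(q_1,\ldots,q_i)$ accepting whenever $q_j\in F$ for some $j$. Any accepting computation of $T$ halting at sweep $r\le k$ is witnessed by the $((r-1)\bmod i+1)$-th component of the simulation state at the end of simulation sweep $\lceil r/i\rceil$, so $T'$ accepts within $\lceil\frac{k}{i}\rceil$ sweeps. Conversely, an accepting branch of $T'$ halting at the end of simulation sweep $m$ with some $q_j\in F$ decodes to an accepting branch of $T$ that halts in state $q_j\in F$ after $r=(m-1)i+j$ sweeps; since $T$ is a $k$-sweep machine we have $r\le k$ and hence $m\le\lceil\frac{k}{i}\rceil$. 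The surplus factor of $2$ in the state bound leaves room for a small amount of auxiliary control, used for a separate initial state, for routing the endmarker $\rightend$ through the cascade on the first sweep, and for keeping the cascade well-defined when a simulated sweep would out-run the actual accepting sweep of $T$.

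The main obstacle I anticipate is the careful bookkeeping at the rightmost tape cell and on the very first sweep: one must check that the cascade at the right end reproduces exactly what $T$ would write there in each of the $i$ simulated sweeps, given that $\rightend$ occurs on the tape only during the first simulation sweep whereas later simulated sweeps see whichever symbol the previous simulated sweep wrote at that cell. Once this base case is settled, both directions of the language equivalence and the sweep bound $\lceil\frac{k}{i}\rceil$ follow from a routine induction matching the nondeterministic choices of $T'$ with those of $T$.
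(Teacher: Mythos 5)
Your construction coincides with the paper's in all its main lines: states of $T'$ are $i$-tuples over $Q$, one transition of $T'$ cascades $i$ transitions of $T$ by feeding the output $y_{t-1}$ of the $(t-1)$st simulated sweep into the $t$th, the initial state is $(q_0,\dots,q_0)$, a tuple is accepting when some component lies in $F$, and the accounting of which component witnesses $T$'s $r$th sweep, the use of the accept-mode sweep bound to get $m\le\lceil k/i\rceil$, and the preservation of determinism are all as in the paper.

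The one genuinely delicate point, however, is precisely the item you defer to ``a small amount of auxiliary control,'' and as literally written your cascade fails there. The transition function of an $\ntrans$ is \emph{partial}: a phantom simulated sweep $j$ (one lying beyond the sweep at which $T$ actually halts accepting, or beyond a sweep in which $T$ gets stuck) may reach a cell where $\delta(q_j,y_{j-1})=\emptyset$. In your construction the entire transition of $T'$ is then undefined, $T'$ halts in mid-sweep, and the branch that should have witnessed acceptance through an \emph{earlier} component is killed, breaking $L(T)\subseteq L(T')$. The paper's repair is a dummy state $d$: a stuck component, and every component above it, switches to $d$ and stays there, so the state set becomes $\bigcup_{t=0}^{i}Q^{t}\times\{d\}^{i-t}$ of size $\sum_{t=0}^{i}n^{t}\le 2n^{i}$ --- this, and not a separate initial state or endmarker routing, is where the factor $2$ goes (the endmarker needs no special treatment; it is simply $y_0$ at the last cell of the first sweep). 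Moreover, extra states alone do not suffice: each new sweep of $T'$ restarts the cascade from $(q_0,\dots,q_0)$ and sees only the tape, so a dummy \emph{output symbol} $d$ must also be added to $\Delta$ and written from the stuck cell onward; otherwise the next sweep of $T'$ would happily continue simulating sweeps of $T$ that do not exist, possibly producing an erroneous acceptance. Once the dummy state and dummy symbol are added, your induction and both directions of the equivalence go through as you describe.
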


\begin{proof}
The principal idea of the construction of an equivalent
$\lceil\frac{k}{i}\rceil\hntrans$ $T'$
is that the state set of~$T'$
is used to simulate~$i$ sweeps of the given $k\hntrans$~$T$ in parallel, step by step.
There may occur the problem that~$T$ gets stuck in some sweep $j\leq k$,
but an earlier sweep $1\leq \ell<j$ ends accepting. To cope with this problem, we
enforce~$T'$ on the one hand to continue the simulation of all sweeps $1\leq \ell<j$
and, on the other hand, to remember that the results of all sweeps after~$j$ are
no longer relevant for the simulation. Hence, these sweeps are simulated by entering
some dummy state~$d$. In addition, a symbol~$d$ is printed on the tape that ensures
that all later sweeps in~$T'$ are eventually performed in the dummy state~$d$ as well.
The formal construction is as follows.

Let $T=\langle Q,\Sigma,\Delta,q_0,\rightend,\delta,F\rangle$
be a $k\hntrans$ with $|Q|=n$.
The state set $Q'$ of an equivalent $\lceil\frac{k}{i}\rceil\hntrans$~$T'$ is defined as
$Q'=\bigcup_{t=0}^i Q^t\times \{d\}^{i-t}$, where~$d$ is a
new state not belonging to~$Q$.
The initial state $q'_0$ of $T'$ is $(q_0,q_0,\ldots,q_0)$.
The output alphabet is $\Delta'=\Delta \cup \{d\}$, where~$d$ is a
new output symbol not belonging to~$\Delta$.
The transition function $\delta'\colon Q'\times(\Sigma \cup \Delta') \to 2^{Q'\times \Delta'}$ of $T'$ is defined
by a procedure that determines the successor states and the output symbols.
For $x\in\Sigma\cup \Delta'$ we obtain
$$
((r_1,r_2,\dots,r_i),y_i)\in \delta'((s_1,s_2,\dots,s_i),x)
$$
as follows.\smallskip

\begin{algorithmic}[1]
\State $y_0 := x$; \quad dummynow := false;
\For{$t=1$ \textbf{to} $i$}
\If{dummynow \textbf{or} $s_t=d$ \textbf{or} $y_{t-1}=d$}
   \State $(r_t,y_t) := (d,d)$;
\Else
   \State $S := \delta(s_t,y_{t-1})$;\label{alg:line-s}
   \If{$S\neq \emptyset$}
      \State \textbf{guess} $(r_t,y_t)\in S$;
   \Else
       \State $(r_t,y_t) := (d,d)$;
       \State dummynow := true;
   \EndIf
\EndIf
\EndFor
\end{algorithmic}\smallskip

If $T$ accepts (for the first time) at the end of a sweep~$j$, the $\ntrans$ $T'$
can simulate at least these~$j$ sweeps of~$T$ successfully.
So, any state
$(r_1,r_2,\dots,r_i)\in Q'$ with an $r_t\in F$ is accepting for $T'$.
This works well if $i$ divides $k$, since in this case any of the~$i$ sweeps
simulated in~$T'$ corresponds to an original sweep in~$T$.
If $i$ does not divide $k$, then in the last, the $\lceil \frac{k}{i} \rceil$th
sweep of~$T'$ only the first $k-\lfloor \frac{k}{i} \rfloor\cdot i$
simulated sweeps of~$T$ have to be considered, since the remaining sweeps simulated
do not exist in~$T$. However, since all words accepted by~$T$ are accepted after
at most~$k$ sweeps, these additional sweeps can never lead to an erroneous acceptance.
For the number of states in~$T'$ we have, for $n\geq 2$,
$|Q'| = \sum_{t=0}^i n^t =
\frac{n^{i+1}-1}{n-1} \leq \frac{n^{i+1}}{n/2}= 2n^i$.

\medskip
If $T$ is deterministic,
the set $S$ in line~\ref{alg:line-s} is either empty or a singleton.
So, the guess in the former case boils down to select the sole element
deterministically. That is, the construction preserves determinism.
\end{proof}

The sweep reduction presented in Theorem~\ref{the:sw} can directly be used to transform
constant sweep bounded $\ntrans$s into equivalent
$\nfa$s.

\begin{theorem}\label{the:trans}
Let $n,k>0$ be integers. Every $n$-state $k\hntrans$ can be converted to an
equivalent $\nfa$ with at most $2n^k$ states.
\end{theorem}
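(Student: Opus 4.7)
The plan is to derive Theorem~\ref{the:trans} as an immediate corollary of Theorem~\ref{the:sw} by specializing the parameter~$i$ to~$k$. Setting $i=k$ in the conclusion of Theorem~\ref{the:sw} produces an equivalent $\lceil k/k\rceil\hntrans = 1\hntrans$ with at most $2n^k$ states. Thus the sweep-reduction construction already compresses $k$ sweeps down to a single sweep without exceeding the desired state bound.

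What remains is to argue that a $1\hntrans$ can be viewed as an \nfa{} with the same state count. This is essentially a cosmetic step, and was already signalled after the definition of sweep complexity (``$1\hntrans$s are essentially \nfa{}s''). In a single sweep the symbols printed on the tape are never consulted again, so the output component of each transition can simply be discarded, leaving an \nfa-style transition relation on $Q' \times (\Sigma \cup \{\rightend\})$. The only genuine technicality is the presence of the endmarker~$\rightend$ at the right end of the tape; it suffices to absorb it into the acceptance condition, declaring a state $p$ accepting in the \nfa{} precisely when $\delta'(p,\rightend)$ contains some pair $(q,y)$ with $q$ accepting in the $1\hntrans$. This modification introduces no new states.

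Combining these two observations yields an \nfa{} with at most $2n^k$ states equivalent to the given $k\hntrans$, as required. I do not anticipate any genuine obstacle: the nontrivial work has already been carried out in the proof of Theorem~\ref{the:sw}, and the passage from a one-sweep transducer to a classical finite automaton is a routine rearrangement that preserves the state count exactly.
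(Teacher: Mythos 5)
Your proposal is correct and follows the paper's own proof exactly: apply Theorem~\ref{the:sw} with $i=k$ to obtain a $2n^k$-state $1\hntrans$, then fold the endmarker transition into the acceptance condition of an \nfa{} with the same states. The paper phrases the new accepting states as those whose $\rightend$-successor tuple has a component accepting in the original transducer, which coincides with your formulation since that is precisely how accepting states of the $1\hntrans$ are defined.
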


\begin{proof}
Given an $n$-state $k\hntrans$~$T$ over the input alphabet~$\Sigma$,
by Theorem~\ref{the:sw} we can obtain an equivalent
$2n^k$-state $1\hntrans$~$T'$. The difference between a $1\hntrans$ and an $\nfa$ is
that the former reads the endmarker, whereas the latter does not read an endmarker.
Hence, $T'$ can be converted to an equivalent $\nfa$~$T''$ as follows.
On input $\Sigma$, the transition functions of $T'$ and $T''$ are defined identically.
The accepting states of~$T''$ are defined to be those states $(s_1,s_2,\ldots,s_k)$
such that there is a component in the $k$-tuple
$\delta'((s_1,s_2,\ldots,s_k),\rightend)$ that is an accepting state in~$T$.
\end{proof}

We obtain a lower bound for the state blow-up in Theorem~\ref{the:trans} by establishing
a lower bound for the state cost of sweep reduction proved in Theorem~\ref{the:sw}.
To this aim, for $n,k>0$, let $L_{n,k}$ be the unary language
$$
L_{n,k}=\sett{a^{c\cdot n^k}}{c\geq 0}.
$$
In \cite{kutrib:2022:dciufst}, an $n$-state $k\htrans$ for $L_{n,k}$ is provided, whereas
any equivalent $\dfa$ or $\nfa$ needs at least $n^k$ states. By using
$L_{n,k}$ as a witness language, we can show
the following theorem.

\begin{theorem}\label{the:nfa}
Let $n,k,i>0$ be integers such that $i$ divides $k$.
There exists an $n$-state $k\hntrans$ such that
any equivalent $\frac{k}{i}\hntrans$ cannot have less than $2^{\frac{-i}{k}}n^i$ states.
\end{theorem}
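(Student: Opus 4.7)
The plan is a straightforward indirect argument using $L_{n,k}$ as witness and Theorem~\ref{the:trans} as a black box. The construction already cited from~\cite{kutrib:2022:dciufst} gives an $n$-state $k\htrans$ (and hence an $n$-state $k\hntrans$) for $L_{n,k}$, so the existence half of the statement is immediate. For the lower bound, I would assume for contradiction that some $\frac{k}{i}\hntrans$ $T'$ equivalent to this device has fewer than $2^{-i/k}\,n^i$ states, and derive an \nfa\ for $L_{n,k}$ whose state count violates the known bound.

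Concretely, I would apply Theorem~\ref{the:trans} to $T'$, using the sweep parameter $k/i$ (an integer by the assumption $i\mid k$). Writing $N$ for the number of states of $T'$, this yields an equivalent \nfa\ of size at most $2 N^{k/i}$. Substituting the hypothetical bound $N < 2^{-i/k} n^i$ gives
\[
2 N^{k/i} \;<\; 2\bigl(2^{-i/k} n^i\bigr)^{k/i} \;=\; 2\cdot 2^{-1}\cdot n^k \;=\; n^k,
\]
which contradicts the \nfa\ lower bound of $n^k$ states for $L_{n,k}$ recalled just before the statement. Hence no such $T'$ exists, and every equivalent $\frac{k}{i}\hntrans$ must use at least $2^{-i/k} n^i$ states.

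There is essentially no hard step: the exponent $-i/k$ in the claimed bound is chosen precisely so that the constant factor $2$ coming out of Theorem~\ref{the:trans} is absorbed when $N$ is raised to the power $k/i$, and the divisibility hypothesis $i\mid k$ is needed only to make $k/i$ integral so that Theorem~\ref{the:trans} can be invoked verbatim (no rounding). The only small thing to verify is that the unary \nfa\ lower bound of $n^k$ states for $L_{n,k}$ quoted from~\cite{kutrib:2022:dciufst} is independent of the transducer machinery being bounded here, which it is, since it is a standard counting/period argument on unary languages.
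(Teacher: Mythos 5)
Your proposal is correct and follows essentially the same route as the paper: take the known $n$-state $k\htrans$ for $L_{n,k}$ as the witness, assume a smaller $\frac{k}{i}\hntrans$ exists, apply Theorem~\ref{the:trans} to get an \nfa\ with $2s^{k/i} < n^k$ states, and contradict the $n^k$ \nfa\ lower bound for $L_{n,k}$. The arithmetic and the role of the hypothesis $i \mid k$ are handled exactly as in the paper's proof.
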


\begin{proof}
Let $T$ be an $n$-state $k\hntrans$.
Suppose by way of contradiction that we could always design an equivalent
$\ntrans$ $T'$ with
$\frac{k}{i}$ sweeps and $s<2^{\frac{-i}{k}}n^i$ states.
By using the result of Theorem~\ref{the:trans}, we can obtain from $T'$ an equivalent
$\nfa$ with $2s^\frac{k}{i}<2\cdot \left(2^{\frac{-i}{k}}n^i\right)^{\frac{k}{i}}= n^k$
states.
By applying this approach in particular to the $n$-state $k\htrans$ recalled above for the
language $L_{n,k}$,
we could obtain an equivalent $\nfa$ having less than $n^{k}$ states
which is a contradiction.
\end{proof}

The lower bound provided by Theorem~\ref{the:nfa} is general in the sense that
the number of $k$ sweeps can be reduced to any $\frac{k}{i}$ as long as $i$
divides $k$. However, for the special case $i=k$, that is, reducing the sweeps
to one only, we have a better lower bound as stated in the following corollary.

\begin{corollary}
For any integers $n,k>0$, there exists an $n$-state $k\hntrans$ which cannot be
converted to an equivalent $\nfa$ with less than $n^k$ states.
\end{corollary}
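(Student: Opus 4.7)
The plan is to recycle exactly the witness language $L_{n,k}=\sett{a^{c\cdot n^k}}{c\geq 0}$ already used in the proof of Theorem~\ref{the:nfa}, together with the two facts borrowed from \cite{kutrib:2022:dciufst} that were recalled just before that theorem: (i)~$L_{n,k}$ is accepted by some $n$-state $k\htrans$, and (ii)~any \nfa\ equivalent to it needs at least $n^k$ states.

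Concretely, I would argue as follows. Fix $n,k>0$. Let $T$ be the $n$-state $k\htrans$ for $L_{n,k}$ guaranteed by \cite{kutrib:2022:dciufst}. Since every $k\htrans$ is in particular a $k\hntrans$ (deterministic transitions are the special case $|\delta(p,x)|\leq 1$ of the nondeterministic transition function), $T$ is an $n$-state $k\hntrans$ accepting $L_{n,k}$. Now suppose, toward a contradiction, that $T$ could be converted to an equivalent \nfa\ with strictly fewer than $n^k$ states. Such an \nfa\ would accept $L_{n,k}$, contradicting the lower bound (ii) recalled above. Hence no equivalent \nfa\ with fewer than $n^k$ states exists, establishing the corollary.

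There is no real obstacle here: the entire content is that the lower bound $n^k$ on \nfa s for $L_{n,k}$, which was already invoked inside the proof of Theorem~\ref{the:nfa} via the intermediate reduction through a $\frac{k}{i}\hntrans$, can be applied directly when we skip the intermediate reduction and ask only about \nfa\ simulation (the case $i=k$). In particular, we avoid the factor of $2$ paid in Theorem~\ref{the:sw} when collapsing $i$ sweeps into one, which is precisely the source of the sharper constant in the corollary compared with Theorem~\ref{the:nfa}. So the corollary is essentially immediate once one observes that the deterministic witness of \cite{kutrib:2022:dciufst} is a fortiori a nondeterministic one.
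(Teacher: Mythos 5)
Your proposal is correct and takes essentially the same route as the paper's own one-line proof: both simply observe that the $n$-state $k\htrans$ for $L_{n,k}$ from the cited reference is a fortiori an $n$-state $k\hntrans$, while any equivalent \nfa\ for $L_{n,k}$ needs at least $n^k$ states. Nothing further is required.
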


\begin{proof}
The language $L_{n,k}$ is accepted by some $n$-state $k\htrans$
but any $\nfa$ for $L_{n,k}$ has at least $n^k$ states.
\end{proof}\vspace*{-2.6mm}

 We conclude this section by discussing the state blow-up of turning constant sweep
bounded $\ntrans$s into $\dfa$s, i.e., the cost of removing both nondeterminism and sweeps
at once.

\begin{theorem}\label{the:dfa}
Let $n,k>0$ be integers. Every $n$-state $k\hntrans$ can be converted to an
equivalent $\dfa$ with at most $2^{2n^k}$ states.
\end{theorem}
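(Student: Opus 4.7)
The plan is to obtain the $\dfa$ by composing Theorem~\ref{the:trans} with the classical subset (powerset) construction for $\nfa$s. Given an $n$-state $k\hntrans$~$T$, I would first invoke Theorem~\ref{the:trans} to produce an equivalent $\nfa$~$N$ with at most $2n^k$ states. Then I would apply the standard determinization: the states of the resulting $\dfa$~$D$ are subsets of the state set of~$N$, the initial state is the singleton containing the initial state of~$N$, transitions are defined by taking the union of $N$'s successor sets over all states in the current subset, and a subset is accepting in~$D$ precisely when it contains some accepting state of~$N$. Correctness of the powerset construction together with Theorem~\ref{the:trans} yields $L(D)=L(N)=L(T)$.

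For the size bound, since $N$ has at most $2n^k$ states, the state set of~$D$ has cardinality at most $2^{|Q_N|}\leq 2^{2n^k}$, matching the claim. The main point to emphasize is that there is no real obstacle here: the substantive content lies in Theorem~\ref{the:trans}, and the statement above is essentially a corollary obtained by feeding the resulting $\nfa$ into the textbook determinization. One could alternatively try a direct one-shot construction from the $k\hntrans$ to the $\dfa$, but the two-step argument already delivers the stated $2^{2n^k}$ bound in a few lines, so I would present it in this concise form.
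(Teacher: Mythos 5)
Your proposal matches the paper's own proof exactly: both apply Theorem~\ref{the:trans} to obtain a $2n^k$-state $\nfa$ and then determinize it via the standard powerset construction, yielding the $2^{2n^k}$ bound. The argument is correct and complete.
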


\begin{proof}
The result follows by first converting, according to Theorem~\ref{the:trans},
the $n$-state $k\hntrans$ into an equivalent $2n^k$-state
$\nfa$ which, in turn, is converted to an equivalent $2^{2n^k}$-state $\dfa$ by the
usual powerset construction (see, e.g.,~\cite{Hopcroft:1979:itatlc:book}).
\end{proof}

A lower bound for the state blow-up in Theorem~\ref{the:dfa} can be proved by
considering the following language for any $n,k>1$:
$$
E_{n,k}=\sett{ubv}{u,v\in\set{a,b}^*,\, |v|=c\cdot n^k-1 \text{ for }c>0}.
$$

\begin{theorem}
For any integers $n>1$ and $k>0$, there is an $(n+1)$-state $k\hntrans$ which cannot be
converted to an equivalent $\dfa$ with less than $2^{n^k}$ states.
\end{theorem}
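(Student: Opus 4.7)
The plan is twofold: (a)~exhibit an $(n+1)$-state $k\hntrans$ $T$ accepting $E_{n,k}$, and (b)~show that every $\dfa$ for $E_{n,k}$ needs at least $2^{n^k}$ states. Observe first that $w\in E_{n,k}$ iff some position $i$ of $w$ carries the letter $b$ and $|w|-i+1$ is a positive multiple of $n^k$; equivalently, some $b$-occurrence sits at distance from the right end that is congruent to $-1$ modulo $n^k$ and is at least $n^k-1$.

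For part~(a), I reuse the $n$-state $k\htrans$ $M$ for $L_{n,k}$ recalled in the paper from \cite{kutrib:2022:dciufst}; after renaming, its output alphabet $\Delta_M$ is assumed disjoint from $\{a,b\}$. Build $T$ with state set $Q_M\cup\{p\}$, initial state $p$, and accepting states $F_M$. In sweep~$1$, $T$ in state $p$ copies each input symbol unchanged; upon reading any $b$, it may nondeterministically apply $\delta_M(q_0^M,a)$, entering $M$'s state space (interpreting this $b$ as the first $a$ of $M$'s tape) and thereafter simulating $M$'s first sweep on the rest of the input, treating every $\{a,b\}$-symbol it reads as $a$. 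In sweep~$j>1$, $T$ again starts in $p$, passes the prefix (still in $\{a,b\}$) unchanged, and on the first $\Delta_M$-symbol applies $\delta_M(q_0^M,\cdot)$ to re-enter $M$'s state space, then simulates $M$'s $j$-th sweep on the remainder of the tape. Because $\Delta_M\cap\{a,b\}=\emptyset$, the prefix/$M$-tape boundary is detected deterministically without any extra bookkeeping state. After $k$ sweeps, $T$ accepts iff $M$ accepts a tape of length $|bv|$, i.e., iff $|bv|$ is a positive multiple of $n^k$, which is exactly the membership condition for $E_{n,k}$. The state count is $|Q_M|+1=n+1$.

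For part~(b), I apply the Myhill--Nerode theorem with the invariant
$$
P(w)=\{(|w|-i)\bmod n^k : 1\le i\le |w|,\ w_i=b\}\subseteq\{0,1,\dots,n^k-1\}.
$$
Pick an extension $z=a^\ell$ with $\ell\ge n^k$. Every $b$ of $wz$ comes from $w$; the $b$ at position $i$ sits at distance $|w|+\ell-i$ from the right end of $wz$, and the requirement ``$|w|+\ell-i=cn^k-1$ for some $c\ge 1$'' reduces (the size bound being automatic from $\ell\ge n^k$) to $(|w|-i)\equiv-1-\ell\pmod{n^k}$. Hence $wz\in E_{n,k}$ iff $(-1-\ell)\bmod n^k\in P(w)$. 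Consequently, whenever $P(w_1)\ne P(w_2)$ I pick a residue $r$ in the symmetric difference and choose $\ell\ge n^k$ with $-1-\ell\equiv r\pmod{n^k}$ to obtain a distinguishing extension. To realize all $2^{n^k}$ subsets, for each $S\subseteq\{0,1,\dots,n^k-1\}$ take $w_S$ of length $n^k$ with $b$ precisely at those positions whose distance from the right end belongs to $S$; then $P(w_S)=S$. The resulting $2^{n^k}$ Myhill--Nerode classes force any $\dfa$ for $E_{n,k}$ to have at least $2^{n^k}$ states.

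The principal obstacle is part~(a): squeezing the nondeterministic guessing, the prefix-scanning in every sweep, and the deterministic re-entry into $M$'s simulation in later sweeps into the \emph{single} extra state $p$, while keeping the output length-preserving and the prefix/$M$-tape boundary unambiguously detectable from one tape symbol. The design succeeds because $\Delta_M$ is chosen disjoint from $\{a,b\}$ and the re-entry transition from $p$ is identified with $M$'s own first transition $\delta_M(q_0^M,\cdot)$, so no auxiliary state is required. Part~(b) is then a clean Myhill--Nerode computation driven by the invariant $P(w)$.
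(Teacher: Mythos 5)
Your part~(b) is correct and is essentially the paper's own argument in different clothing: the paper also distinguishes all $2^{n^k}$ words of length $n^k$ pairwise, using the extension $a^{i-1}$ at a position where the two words differ; your invariant $P(w)$ and the extensions $a^{\ell}$ package the same computation as an explicit Myhill--Nerode count. The problem is in part~(a), and it is exactly the ``principal obstacle'' you name. An $\ntrans$ has a single, sweep-independent transition function, so you cannot stipulate that in sweep~$1$ the state $p$ ``may guess on $b$'' while in sweeps $j>1$ it merely ``passes the prefix unchanged'': if $\delta(p,b)$ contains the option of entering $M$'s state space and the prefix is left verbatim in $\{a,b\}$, that option is available in \emph{every} sweep. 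This creates computations that (i)~re-guess at an earlier $b$ in sweep $j>1$ and then run $M$'s first-sweep transitions over a hybrid tape mixing $\{a,b\}$-symbols with $\Delta_M$-symbols already produced by the genuine simulation --- whether such a run can reach an accepting state depends on the internals of $M$, which you do not control, so soundness ($L(T)\subseteq E_{n,k}$) is not established; and (ii)~even when they accept a word that \emph{is} in $E_{n,k}$, they start $M$'s simulation one or more sweeps late and hence accept only after more than $k$ sweeps, violating the accept-mode definition of a $k\hntrans$, under which \emph{all} accepting computations on words of $L(T)$ must halt within $k$ sweeps.

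The paper's construction closes precisely this hole by two devices you omit: during the first sweep the initial state blanks every prefix symbol it passes over (so in later sweeps there is no $b$ left on which the guess transition could fire), and the endmarker is rewritten through a sequence $\rightend_0,\rightend_1,\dots,\rightend_{k-1}$ so that acceptance is only possible upon reading $\rightend_{k-1}$, pinning the accepting sweep to exactly the $k$th and killing any computation that fails to commit to a guess in sweep~$1$. Your construction can be repaired in the same spirit at no cost in states (have $p$ emit a blank rather than an echo on $\{a,b\}$, and make $\delta(p,\rightend)$ undefined), but as written the single extra state $p$ does not suffice to make the guessing one-shot, and the claimed $(n+1)$-state $k\hntrans$ for $E_{n,k}$ is not obtained.
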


\begin{proof}
First we construct an $(n+1)$-state $k\hntrans$
$T=\langle Q,\{a,b\},\Delta,q_0,\rightend_0,\delta,F\rangle$
that accepts the language~$E_{n,k}$. We set
$Q=\{q_0,q_1,q_2,\ldots,q_n\}$,
\mbox{$\Delta=\{a,b,\blank, 1,2,\ldots,n,{\bar
    n},\rightend_0,\rightend_1,\dots,\rightend_{k-1}\}$,}
and
\mbox{$F=\{q_n\}$.}

\medskip
Basically, $T$ processes an input string as follows.
In a first sweep, $T$ reads and blanks the input in its initial state $q_0$, where it guesses on every
input symbol $b$ whether it separates the prefix $u$ from the suffix $v$. If
it does not find a $b$ or never guesses in the affirmative, $T$ reaches the
right endmarker in state $q_0$ and halts rejecting. Otherwise, beginning with
the $b$ transducer $T$ starts to check whether the remaining input length
is a multiple of $n$ by rewriting it as a sequence of consecutive
blocks of the form
$12\cdots n$, followed by $\rightend_1$.
If and only if $T$ reaches the endmarker in state $q_1$ the test was
positive and the computation continues.

\medskip
$
\begin{array}[t]{r@{\ }rcl}
(1) & \delta(q_0,x) &=& \{(q_0, \blank)\} \text{ if } x\in\{a,\blank\}\\
(2) & \delta(q_0,b) &=& \{(q_0, \blank), (q_2, 1)\}\\[1mm]
(3) & \delta(q_i,x) &=& \{(q_{i+1}, i)\} \text{ if } x\in\{a,b,n\}, \mbox{for $1\le i\le n-1$}\\
(4) & \delta(q_n,x) &=& \{(q_1, n)\} \text{ if } x\in\{a,b,n\}\\[1mm]
(5) & \delta(q_1,\rightend_i) &=& \{(q_0, \rightend_{i+1}\}, \mbox{for $0\le i\le k-2$}\\
\end{array}
$
\medskip

\noindent
In the second sweep, after reaching the first $1$ in state $q_0$, transducer $T$
checks whether the length of the remaining input string
is a multiple of $n^2$ by rewriting $n$ consecutive blocks $12\cdots n$ with
the block $1^n\ 2^n\ \cdots\ (n-1)^n\ {\bar n}^{n-1}n$,
followed by $\rightend_2$.
If and only if $T$ reaches the endmarker in state $q_1$ the test was
positive and the computation continues.

\medskip
$
\begin{array}[t]{r@{\ }rcl}
(6) & \delta(q_0,1) &=& \{(q_1, 1)\}\\[1mm]
(7) & \delta(q_i,x) &=& \{(q_{i}, i)\} \text{ if } x\in\{1,2,\ldots,n-1,\bar n\}, \mbox{for $1\le i\le n-1$}\\
(8) & \delta(q_n,x) &=& \{(q_n, \bar{n})\} \text{ if } x\in\{1,2,\ldots,n-1,\bar n\}\\
\end{array}
$

\medskip
\noindent
In the third sweep, after reaching the first $1$ in state $q_0$, transducer
$T$ checks whether the length of the remaining input string
is a multiple of~$n^3$ by rewriting $n$ consecutive blocks \mbox{$1^n\ 2^n\ \cdots\ (n-1)^n\ {\bar n}^{n-1}n$} with
the block $1^{n^2}\ 2^{n^2}\ \cdots\ (n-1)^{n^2}\ {\bar n}^{{n^2}-1}n$,   followed by $\rightend_3$.
If and only if $T$ reaches the endmarker in state $q_1$ the test was
positive and the computation continues.

\medskip
This behavior is iterated and, according to {Transition}~(9), the input string
is easily seen to be accepted after the $k$th sweep only, upon reading $\rightend_{k-1}$.

\medskip
$
\begin{array}[t]{r@{\ }rcl}
(9) & \delta(q_1,\rightend_{k-1}) &=& \{(q_n, \rightend_{k-1})\}\\
\end{array}
$

\medskip
It remains to be shown that language~$E_{n,k}$ cannot be accepted by a
$\dfa$ with less than $2^{n^k}$ states.
Suppose by contradiction that there is a $\dfa$ $A$ which accepts $E_{n,k}$ with less than
$2^{n^k}$ states.
Since there are $2^{n^k}$ different words of length $n^k$ over alphabet
$\{a,b\}$ and the $\dfa$ $A$ has less than~$2^{n^k}$ states, at least two
of these words drive $A$ into the same state.
Say the words are \mbox{$w_1=s_1s_2\cdots s_{n^k}$} and $w_2=t_1t_2\cdots t_{n^k}$.
Since $w_1\neq w_2$, there is some $1\leq i\leq n^k$ such that $s_i\neq t_i$.
Without loss of generality, we may assume $s_i=b$ and $t_i=a$.

Since $A$ is in the same state after processing $w_1$ and $w_2$, it is in the
same state after processing $w_1a^{i-1}$ and $w_2a^{i-1}$. We have
$n^k\leq |w_1a^{i-1}|=|w_2a^{i-1}|=n^k+(i-1)\leq 2n^k-1$.
Moreover, $w_1a^{i-1}=s_1s_2\cdots s_{i-1} b s_{i+1}\cdots s_{n^k} a^{i-1}$
and, thus, $w_1a^{i-1}$ belongs to $E_{n,k}$. On the other hand,
$w_2a^{i-1}=t_1t_2\cdots t_{i-1} a t_{i+1}\cdots t_{n^k} a^{i-1}$
and, thus, $w_wa^{i-1}$ does not belong to $E_{n,k}$. This is a contradiction
to the assumption that $A$ has less than $2^{n^k}$ states.
\end{proof}

\subsection{Decidability questions}

Since every $k\hntrans$ can be converted into an equivalent $\dfa$ by Theorem~\ref{the:dfa}
it is clear that all decidable questions for $\dfa$s are also decidable for $k\hntrans$s.
It has been shown in~\cite{kutrib:2022:dciufst} that for $k\htrans$s the questions of testing emptiness,
universality, finiteness, infiniteness, inclusion, or equivalence are
\textsf{NL}-complete.
Here, we will investigate the computational complexity of these questions for
$k\hntrans$s and will obtain that the questions of emptiness,
finiteness, and infiniteness are \textsf{NL}-complete,
whereas the questions of universality,
inclusion, and equivalence turn out to be \textsf{PSPACE}-complete.
So, for $k\hntrans$s the questions of testing emptiness,
universality, finiteness, infiniteness, inclusion, or equivalence
have the same computational complexity as for
$\nfa$s (see, e.g., \cite{Jones:1975:sbracp,garey:1979:npbook}
and~\cite{holzer:2011:daccofaas} for a survey).

\begin{theorem}\label{thm:comp:complexity:1}
Let $k \ge 1$ be an integer. Then
for $k\hntrans$s the problems of testing emptiness,
finiteness, and infiniteness are {\em \textsf{NL}}-complete.
\end{theorem}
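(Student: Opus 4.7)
The plan is to prove membership and hardness separately. \textsf{NL}-membership will follow from simulating, on-the-fly and in logarithmic space, the polynomial-size \nfa{} equivalent produced by Theorem~\ref{the:trans}; \textsf{NL}-hardness will be inherited from the known lower bounds for $k\htrans$s established in~\cite{kutrib:2022:dciufst}, since every $k\htrans$ is syntactically also a $k\hntrans$.

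For the upper bound I would argue as follows. By Theorem~\ref{the:trans}, an $n$-state $k\hntrans$ $T$ admits an equivalent \nfa{} $T''$ with at most $2n^k$ states, where each state is a tuple in $(Q\cup\{d\})^k$ and is therefore representable in $O(k\log n)$ bits, which for fixed $k$ is $O(\log n)$. Rather than building $T''$ explicitly, I would simulate its transition relation on-the-fly using the procedure of Theorem~\ref{the:sw}: given a current tuple together with an input symbol, one nondeterministically produces a successor tuple by iterating through the $k$ components, querying $\delta$ for each one and updating the dummy flag. With this simulation in place, emptiness becomes the standard \textsf{NL}~reachability question: guess successive input symbols together with the evolving tuple-state — at most $2n^k$ steps are needed — and accept upon reaching a tuple containing an accepting state of $T$. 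Infiniteness is the classical \textsf{NL}~witness: guess a tuple $p$, and verify that $p$ is reachable from the initial tuple, that $p$ lies on a non-trivial cycle, and that some accepting tuple is reachable from~$p$; each of these three subtasks is a reachability instance whose path-length bound $2n^k$ is tracked in an $O(\log n)$-bit counter. Finiteness is then the complement of infiniteness, and lies in \textsf{NL} by Immerman--Szelepcsényi ($\textsf{NL}=\textsf{coNL}$).

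For the lower bound, every $k\htrans$ is a special case of a $k\hntrans$ (just read $|\delta(p,x)|\le 1$ as a particular instance of the general transition function), so the identity map is a log-space reduction from each of the three problems on $k\htrans$s to the corresponding problem on $k\hntrans$s. The \textsf{NL}-hardness results of~\cite{kutrib:2022:dciufst} therefore carry over verbatim.

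The only real care-point is the log-space bookkeeping: one must check that the on-the-fly successor computation (including the propagation of the dummy symbol~$d$) and the counters used in the cycle-detection step all fit within $O(\log n)$ space. For fixed~$k$ this is routine, because $k$ enters only as a multiplicative factor in the width of a tuple-state and in the exponent of the size bound $2n^k$, never as a parameter stored inside a single counter.
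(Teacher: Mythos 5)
Your proposal is correct and follows essentially the same route as the paper: \textsf{NL}-membership via an on-the-fly logarithmic-space simulation of the $2n^k$-state \nfa{} obtained from Theorems~\ref{the:sw} and~\ref{the:trans} (with $\textsf{NL}=\textsf{coNL}$ handling emptiness and finiteness), and hardness by a trivial reduction from an already-known \textsf{NL}-hard problem. The only cosmetic differences are that the paper certifies infiniteness by guessing an accepted input of length at least $2n^k$ rather than by explicit cycle detection, and obtains hardness by reducing from \nfa{}s rather than from the deterministic $k\htrans$ case of~\cite{kutrib:2022:dciufst}; both variants are equally valid.
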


\begin{proof}
First, we show that the problem of non-emptiness belongs to~\textsf{NL}.
Since \textsf{NL} is closed under complementation, emptiness belongs to
\textsf{NL} as well. We describe a two-way nondeterministic Turing machine~$M$
which receives an encoding $\cod(A)$ of some $k\hntrans$ $A$ on its
read-only input tape and accepts the input if and only if $A$ accepts a non-empty
language while the space used on its working tape is bounded by
$O(\lg{} |\cod(A)|)$. Then, the work space is bounded by $O(\lg n)$,
with $n$ being the maximum among the number of states in~$A$,
the size of the input alphabet of $A$, and the size of the output alphabet of $A$,
since all parameters are part of the encoding of $A$ on the input tape of $M$.

It is shown in Theorem~\ref{the:trans} which is based on Theorem~\ref{the:sw}
that~$A$ can be converted into an
equivalent $\nfa$ with at most $2n^k$ states. This construction basically
simulates $k$ sweeps in one sweep and keeps in parallel track of the $k$ states
occuring in each sweep.

The basic idea for the Turing machine~$M$ is to guess a word and to check
``on the fly'' whether it is accepted by $A$.
To simulate the $k$ sweeps of~$A$ on the guessed input we apply the construction
described above.
Thus, we have to keep track of the~$k$ current states which are obtained by basically applying
the nondeterministic procedure described in the proof of Theorem~\ref{the:sw}.
A close look on that procedure shows that it can be implemented using no more
than $O(\lg n)$ tape cells.
Moreover, it is clear that each state along a sweep can be represented by $O(\lg n)$ tape cells.
Hence, the current states of the~$k$ sweeps of~$A$ are altogether represented by
$O(\lg n)$ tape cells.

\medskip
Now, the Turing machine $M$ guesses one input symbol~$a$ and updates all stored states
of~$A$. This behavior is iterated until either~$A$ halts or $A$ halts after having
guessed the endmarker. Then, $M$ halts accepting if~$A$ accepts and halts rejecting
in any other case. Thus, $M$ decides the non-emptiness of~$A$
using an amount of tape cells which is at most logarithmic in the
length of the input.

To show that the problem of testing infiniteness belongs to~\textsf{NL}
we basically use the construction for non-emptiness. Additionally,
we count the length of the guessed word and accept only if the underlying
$k\hntrans$ $A$ would have accepted an input of length at least $2n^k$.
If $L(A)$ is infinite, then~$A$ clearly accepts an input of length at least $2n^k$.
On the other hand, if $A$ accepts an input of length at least $2n^k$,
then the equivalent $\nfa$ with at most $2n^k$ states according to Theorem~\ref{the:trans}
accepts an input of length at least $2n^k$ as well. But this means that there is an
accepting computation in the $\nfa$ in which at least one
state is entered twice which implies that an infinite language is accepted.
Thus, it remains to be argued that the counting up to $2n^k$ can be realized in
logarithmic space. However, we can implement on $M$'s working tape a
binary counter that counts up to~$2n^k$. With the usual construction this needs at most
\mbox{$O(\lg 2n^k)=O(\lg n)$} tape cells.
Altogether, we obtain that the problem of testing infiniteness belongs to~\textsf{NL}.
Since \textsf{NL} is closed under complementation the problem of testing finiteness
belongs to~\textsf{NL} as well.

\medskip
The hardness results follow directly from the hardness results for
$\nfa$s, which are basically shown in~\cite{Jones:1975:sbracp}, considering the
fact that every $\nfa$ $N$ with $n$ states can be converted into an
equivalent $k\hntrans$ $N'$ simulating~$N$ and having the same $n$~states
plus an additional accepting state which is entered exactly when $N'$ reads
the endmarker at the end of the first sweep and the previous simulation of~$N$
ended up in an accepting state of $N$. Obviously, the latter construction
can be realized in deterministic logarithmic space.
\end{proof}

\begin{theorem}\label{thm:comp:complexity:2}
Let $k \ge 1$ be an integer. Then
for $k\hntrans$s the problems of testing universality,
inclusion, and equivalence are {\em \textsf{PSPACE}}-complete.
\end{theorem}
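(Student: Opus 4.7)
The plan is to reduce all three problems, in both directions, to the corresponding problems for $\nfa$s, which are classically known to be \textsf{PSPACE}-complete (see, e.g., \cite{Jones:1975:sbracp,garey:1979:npbook,holzer:2011:daccofaas}).

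For the upper bound, I apply Theorem~\ref{the:trans}: an $n$-state $k\hntrans$ is equivalent to an $\nfa$ of at most $2n^k$ states. Since $k$ is fixed, $2n^k$ is polynomial in $n$, and each state of the equivalent $\nfa$ can be represented by $O(\lg n)$ bits as a $k$-tuple of $\ntrans$-states (or dummies). Hence, instances of universality, inclusion, and equivalence for $k\hntrans$s reduce to instances of the corresponding $\nfa$ problems of polynomially bounded size; a logspace transducer can even produce the transitions of the target $\nfa$ on demand, by running the procedure from the proof of Theorem~\ref{the:sw} on the encoded tuples. Plugging in the standard \textsf{PSPACE} algorithms for $\nfa$s (which guess a separating word symbol by symbol while maintaining in polynomial space the subsets of currently reachable states, with a polynomial length bound derived from the NFA size) yields membership in \textsf{PSPACE} for all three problems.

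For the lower bound, I reuse the reduction already employed at the end of the proof of Theorem~\ref{thm:comp:complexity:1}: any $n$-state $\nfa$ $N$ is converted in deterministic logarithmic space into an equivalent $(n+1)$-state $k\hntrans$ $N'$, which simulates $N$ during its first sweep and enters a fresh accepting state upon reading the endmarker precisely when the simulation of $N$ has reached an accepting state. Since acceptance by $N'$ occurs already at the end of the first sweep and no further transitions are defined from the fresh state, we have $L(N')=L(N)$, and the reduction preserves each of universality, inclusion, and equivalence. The classical \textsf{PSPACE}-hardness of these questions for $\nfa$s (even over fixed binary alphabet) therefore transfers to $k\hntrans$s.

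I do not foresee a significant obstacle. The only two points that require attention are (i)~observing that the NIUFST-to-NFA blow-up from Theorem~\ref{the:trans} is polynomial when $k$ is fixed, which is precisely what keeps the reductions inside \textsf{PSPACE}, and (ii)~confirming that allowing up to $k\ge 1$ sweeps in the target $k\hntrans$ causes no discrepancy in the NFA-to-NIUFST direction, since the semantics permits acceptance at the end of any sweep $\le k$ and the above construction already accepts at the first sweep. Both points are handled by the previously established results and the construction recalled from Theorem~\ref{thm:comp:complexity:1}.
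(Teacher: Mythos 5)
Your proposal is correct and follows essentially the same route as the paper: membership in \textsf{PSPACE} via the conversion of Theorem~\ref{the:trans} to a polynomial-size $\nfa$ (the paper carries out the conversion in deterministic polynomial time, you in logspace on demand, which is only a minor strengthening), followed by the standard \textsf{PSPACE} algorithms for $\nfa$s; and hardness via the same $\nfa$-to-$k\hntrans$ simulation with one extra accepting state already used for Theorem~\ref{thm:comp:complexity:1}. No gaps.
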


\begin{proof}
To show that the problems are in \textsf{PSPACE} it suffices to show that
the conversion of a $k\hntrans$ into an equivalent $\nfa$ is possible
in $\textsf{P} \subseteq \textsf{PSPACE}$, since the corresponding problems
for $\nfa$s are known to be in \textsf{PSPACE}.
We consider a deterministic Turing machine which receives
an encoding $\cod(A)$ of a $k\hntrans$s $A$ and
denote with $n$ the maximum among the number of states in~$A$,
the size of the input alphabet of $A$, and the size of the output
alphabet of $A$.
Hence, we have to show that the time is bounded by some polynomial in $n$.

To estimate the time complexity of the conversion procedure described in
Theorem~\ref{the:sw} and Theorem~\ref{the:trans} we note that the most costly
action is that a nondeterministic
procedure which determines the successor states and the output symbol
is performed for each combination from $Q' \times (\Sigma \cup \Delta')$.
Since each run of the nondeterministic procedure can be realized with an amount of space
that is logarithmic in~$n$, each run can be realized by a deterministic procedure
with an amount of time that is polynomial in~$n$.
Since $|Q'\times (\Sigma \cup \Delta')| \in O(n^k)$, we obtain that the time complexity
of the conversion is in \textsf{P}.

The hardness results follow again directly from the hardness results for
$\nfa$s and the fact that
every $\nfa$ $N$ with $n$ states can be converted in deterministic polynomial
time into an
equivalent $k\hntrans$ simulating~$N$ and having $n+1$~states.
\end{proof}

\section{An infinite sweep hierarchy}\label{sec:infinite-hierarchy}

We now consider $s(n)\hntrans$s where $s(n)$ is a non-constant
function. In~\cite{kutrib:2022:dciufst} it is proved that $o(\lg n)$ sweep bounded
{\trans}s accept regular languages only, and that such a logarithmic
sweep lower bound is tight for nonregular acceptance.
Then, a three-level proper language hierarchy is established, where
$O(n)$ sweeps are better than~$O(\sqrt{n})$ sweeps which, in turn,
are better than $O(\lg n)$ sweeps for $\trans$.
Here, we extend the hierarchy to infinitely many levels for both
$\trans$s and~$\ntrans$s.

Let $f:\mathbb{N} \to\mathbb{N}$ be a non-decreasing function. Its inverse is defined as the function
$
f^{-1}(n)= \min\{\,m\in \mathbb{N}\mid f(m)\geq n\,\}.
$
To show an infinite hierarchy dependent on some resources,
where the limits of the resources are given by some functions in
 the length of the input, it is often necessary to control the lengths of
the input so that they depend on the limiting functions. Usually, this is done
by requiring that the functions are \emph{constructible} in a desired sense.
The following notion of construc\-tibility expresses the idea that the
length of a word relative to the length of a prefix is determined by
a function.

\begin{definition}\label{def:constructible}
A non-decreasing computable function
$f\colon\mathbb{N}\to\mathbb{N}$ with $f(n)\geq n$ is said to be \emph{constructible}
if there exists an $s(n)\htrans$ $T$ with $s(n)\in O(f^{-1}(n))$ and
an input alphabet $\Sigma\cupdot\{a\}$, such that
$$
L(T)\subseteq\{\, a^m v\mid m\geq 1, v\in\Sigma^*, \, |v|=f(m)\,\}
$$
and such that, for all $m\geq 1$, there exists a word of the form $a^mv$ in $L(T)$.
The $s(n)\htrans$ $T$ is said to be a \emph{constructor} for $f$.
\end{definition}

In order to show that the class of functions that are constructible in this
sense is sufficiently rich to witness an infinite dense hierarchy, we next
show that it is closed under addition and multiplication.

\begin{proposition}\label{prop:add-constructible}
Let $f\colon\mathbb{N}\to\mathbb{N}$ and $g\colon\mathbb{N}\to\mathbb{N}$
be two constructible functions.
Then the function $f+g$ is constructible as well.
\end{proposition}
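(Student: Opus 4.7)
The plan is to build a constructor $T_{f+g}$ for $f+g$ by a product-state parallel simulation of the given constructors $T_f$ and $T_g$ on a single tape. By renaming alphabets and states if needed, I may assume the input alphabets $\Sigma_f,\Sigma_g$ and output alphabets $\Delta_f,\Delta_g$ of $T_f$ and $T_g$ are pairwise disjoint (and disjoint from $\{a\}$). Over the combined input alphabet $\Sigma_f \cupdot \Sigma_g \cupdot \{a\}$, the accepted inputs of $T_{f+g}$ will be the words of the form $a^m v_f v_g$ with $v_f \in \Sigma_f^*$, $v_g \in \Sigma_g^*$, $|v_f| = f(m)$, $|v_g| = g(m)$; the suffix of length $(f+g)(m)$ after the $a$-prefix then witnesses $f+g$ at argument $m$.

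Each sweep of $T_{f+g}$ will maintain a product state $(q_f,q_g)$ and perform both simulations simultaneously: on the $a$-prefix both components advance on input $a$ and emit a product symbol so the pair is recoverable later; on the $\Sigma_f$-region only $q_f$ advances (by $T_f$-transitions) while $q_g$ is paused; the first $\Sigma_g$-symbol is interpreted as $T_f$'s endmarker, causing $q_f$ to apply $T_f$'s endmarker transition and then freeze, while $q_g$ resumes and processes the $\Sigma_g$-region to the rightmost endmarker, finally applying $T_g$'s endmarker transition. Because $\Delta_f$ and $\Delta_g$ are disjoint and the $a$-prefix carries product symbols, the three tape regions remain identifiable across sweeps, so the same local dynamics iterate correctly from the second sweep on.

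Since $T_f$ and $T_g$ may reach their accepting sets in different sweeps, I would next pre-modify each of them by a \emph{latching} gadget: upon the machine's first acceptance, its output rewrites the endmarker symbol by a distinguished marker that henceforth forces the state at the right end of every subsequent sweep to be accepting, irrespective of the rest of the tape. With latching in place, $T_{f+g}$ can safely accept the first sweep in which both product components are simultaneously in accepting states, which occurs no later than sweep $\max(s_f(m+f(m)), s_g(m+g(m)))$.

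For the sweep bound, set $n = m + f(m) + g(m)$ and $M = (f+g)^{-1}(n)$. Using $(f+g)(m) = n-m < n$ together with $f,g \geq \mathrm{id}$, one obtains $M > m$, and a short computation yields $M = \Theta(m)$ asymptotically as well as $f^{-1}(m+f(m)) = O(m) = O(M)$ and likewise for $g$; combined with $s_f(N) \in O(f^{-1}(N))$ and $s_g(N) \in O(g^{-1}(N))$, the parallel simulation runs in $O((f+g)^{-1}(n))$ sweeps. For every $m \geq 1$, the constructor property of $T_f$ and $T_g$ supplies witness suffixes $v_f, v_g$ whose concatenation $a^m v_f v_g$ is accepted by $T_{f+g}$, so $T_{f+g}$ constructs $f+g$. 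I expect the main obstacle to be the clean design and verification of the latching mechanism together with the asymptotic verification that the parallel sweep count really matches $O((f+g)^{-1}(n))$ even when $f$ and $g$ have very different growth rates and possibly irregular behaviour on small inputs.
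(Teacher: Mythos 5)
Your construction is correct and follows essentially the same route as the paper: both build a constructor whose language consists of words $a^m v_f v_g$ with $|v_f|=f(m)$ and $|v_g|=g(m)$, run $T_f$ and $T_g$ in parallel on the respective regions (the paper via two output tracks, you via product states with region multiplexing, which amounts to the same), and bound the sweep count by the maximum of the two simulations, which is $O(m)\subseteq O((f+g)^{-1}(n))$. Your latching gadget plays exactly the role of the paper's remark that once one simulation halts only the other is continued and acceptance requires both to have accepted.
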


\begin{proof}
Let $T_f$ be a constructor for $f$ and $T_g$ be a constructor for $g$.
We may safely  assume that the input alphabet of $T_f$ is
$\Sigma_f\cupdot\{a\}$ and the input alphabet of~$T_g$ is
$\Sigma_g\cupdot\{a\}$, where $\Sigma_f$ and $\Sigma_g$ are disjoint.
Now, the idea for designing a constructor $T$ for $f+g$
is to have
$$
L(T)\subseteq \{\, a^m v_fv_g\mid m\geq 1,\, v_f\in\Sigma_f^*,\,
|v_f|=f(m),\, v_g\in\Sigma_g^*,\, |v_g|=g(m)\,\}.
$$
The constructor $T$ splits its input into two tracks.
In any sweep, on the first track the constructor $T_f$ is simulated, whereby
the symbols from $\Sigma_g$, that is, the factor $v_g$ is ignored.
Similarly, on the second track the constructor $T_g$ is simulated.
If one of the simulations halts, only the other simulation is continued.
Now, $T$ accepts if and only if both simulations end accepting.

Clearly, the language $L(T)$ is
a subset
as desired. In addition,
for any $m\geq 1$, there exists a word of the form $a^mv_fv_g$ in $L(T)$.
On such a word, the constructor~$T$ has a sweep complexity which is in
$O(\max\{f^{-1}(m+f(m)),g^{-1}(m+g(m))\})$. Since $f(n)\geq n$ and
$g(n)\geq n$ we conclude that the sweep complexity is in $O(m)$. So,
the sweep complexity $s(n)$ of $T$ is $s(n)\in O((f+g)^{-1}(n))$,
which proves the proposition.
\end{proof}

\begin{proposition}\label{prop:mult-constructible}
Let $f\colon\mathbb{N}\to\mathbb{N}$ and $g\colon\mathbb{N}\to\mathbb{N}$
be two constructible functions.
Then the function $f\cdot g$ is constructible as well.
\end{proposition}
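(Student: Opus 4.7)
The plan is to reuse the two-track simulation idea of Proposition~\ref{prop:add-constructible}, but now organising the $v$-part of the tape as $f(m)$ consecutive blocks of length $g(m)$, so that $|v|=f(m)\cdot g(m)$ as required. Let $T_f$ and $T_g$ be constructors for $f$ and $g$ with (WLOG disjoint) input alphabets $\Sigma_f\cupdot\{a\}$ and $\Sigma_g\cupdot\{a\}$, and sweep complexities $s_f(n)\in O(f^{-1}(n))$ and $s_g(n)\in O(g^{-1}(n))$.

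I would build $T$ with input alphabet $\{a\}\cup\big((\Sigma_f\cup\{\blank\})\times\Sigma_g\big)$. An intended accepted input has the form $a^m(b_1,c_1)(b_2,c_2)\cdots$ where on the first track the entries $b_i$ are non-blank exactly at the positions $1, g(m)+1, 2g(m)+1,\ldots,(f(m)-1)g(m)+1$ and spell a word $w_f\in\Sigma_f^{f(m)}$ accepted by $T_f$, while on the second track the $j$-th block of $g(m)$ entries spells a word $w_g^{(j)}\in\Sigma_g^{g(m)}$ accepted by $T_g$. Additional tracks serve as scratch space to store in place the evolving tape contents of the implicit $T_f$-tape $a^m w_f\rightend$ and of the $f(m)$ implicit $T_g$-tapes $a^m w_g^{(j)}\rightend$, all of which share the common $a^m$-prefix.

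In each sweep, $T$ simultaneously simulates one sweep of $T_f$ and one sweep of each of the $f(m)$ parallel copies of $T_g$. Over the shared prefix $a^m$, both $T_f$ and $T_g$ being deterministic reach uniquely determined states $q_f$ and $q_g$ depending only on the current sweep number; all parallel $T_g$-copies then independently resume from $q_g$ at the first cell of their block, whereas $T_f$ advances by one input symbol exactly at each non-blank first-track cell. Simulated endmarkers are consumed at block boundaries (for the $T_g$-copies) or at $T$'s own endmarker (for $T_f$). $T$ accepts when both the $T_f$-simulation and every $T_g$-simulation have simultaneously ended an accepting sweep, which happens after at most $\max(s_f(m+f(m)),s_g(m+g(m)))\in O(m)$ sweeps. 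Any accepted input must pass both verifications: $T_f$ forces exactly $f(m)$ block-start marks, and each $T_g$-copy forces its block to have length $g(m)$, so $|v|=f(m)\cdot g(m)$; conversely, valid choices of $w_f$ and the $w_g^{(j)}$'s exhibit at least one accepted input for every $m\geq 1$, and on inputs of length $n=m+f(m)g(m)$ the sweep count is $O(m)=O((f\cdot g)^{-1}(n))$.

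The main obstacle is realising the synchronised parallel simulation within a single sweep of $T$, in particular correctly handling the shared $a^m$-prefix and the $f(m)$ virtual endmarkers of the $T_g$-copies without state clashes; this is carried out by keeping all bookkeeping---the current state of $T_f$, the current state of the active $T_g$-copy, and an indicator for the position within the current block---in the finite control, whose size remains bounded by a constant independent of~$m$.
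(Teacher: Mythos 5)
Your construction is correct and follows essentially the same route as the paper's: split the suffix into blocks sharing the common prefix $a^m$, run one constructor once across the block markers and the other constructor in parallel on each block, with all parallel copies forked from the (deterministic, hence common) state reached after processing the prefix, and accept once every simulation has ended accepting. The only differences are cosmetic---you swap the roles of $f$ and $g$ and overlay the $T_f$-input on a separate track rather than interleaving separator symbols, which incidentally makes your suffix length exactly $f(m)\cdot g(m)$, whereas the paper's interleaved form $x_1v_1\cdots x_{g(m)}v_{g(m)}$ has length $g(m)(f(m)+1)$.
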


\begin{proof}
Let $T_f$ be a constructor for $f$ and $T_g$ be a constructor for $g$.
As in the proof of Proposition~\ref{prop:add-constructible},
we may safely assume that the input alphabet of $T_f$ is
$\Sigma_f\cupdot\{a\}$ and the input alphabet of~$T_g$ is
$\Sigma_g\cupdot\{a\}$, where $\Sigma_f$ and $\Sigma_g$ are disjoint.
Now, the idea for designing a constructor $T$ for $f\cdot g$
is to have
\[
L(T) =\{\, a^m x_1 v_1 x_2 v_2 \cdots x_{g(m)}v_{g(m)}\mid
 m\geq 1,\,
a^mv_i\in L(T_f),\, a^mx_1x_2\cdots x_{g(m)}\in L(T_g)\,\}.
\]

To this end, $T$ splits its input into two tracks.
In any sweep, on the first track the constructor~$T_f$ is simulated, whereby
the symbols from $\Sigma_g$ are ignored.
Moreover, the computation of $T_f$ is independently simulated on any maximal
factor $v_i$
of symbols from $\Sigma_f$. These simulations are all started in the
state in which $T_f$ enters the first symbol after the prefix $a^m$.
On the second track the constructor~$T_g$ is simulated
on symbols of $\Sigma_g$, whereby
the symbols from $\Sigma_f$ are ignored.
If a simulation halts, only the other simulations are continued.
Now, $T$ accepts if and only if all simulations end accepting.

Since $T$ can verify the correct form of the input in an initial sweep,
the language $L(T)$ is
a subset
as desired. In addition,
for any $m\geq 1$, there exists a word
$a^m x_1 v_1 x_2 v_2 \cdots x_{g(m)}v_{g(m)}$ in $L(T)$.
As before, on such a word, $T$ has sweep complexity
$O(\max\{f^{-1}(m+f(m)),g^{-1}(m+g(m))\})$. Since $f(n)\geq n$ and
$g(n)\geq n$ we conclude that the sweep complexity is in $O(m)$. So,
the sweep complexity $s(n)$ of $T$ is $s(n)\in O((f\cdot g)^{-1}(n))$.
\end{proof}

In~\cite{kutrib:2022:dciufst} it is shown that the unary language
$L_\subtext{uexpo}=\{\, a^{2^k}\mid k\geq 0\,\}$ is accepted by
some $s(n)\htrans$ with $s(n)\in O(\lg n)$. The construction can
straightforwardly be extended to show that the function $f(n)=2^n$
is constructible.

Moreover, again from~\cite{kutrib:2022:dciufst}, we know that
\mbox{$L_\subtext{eq}=\{\,u\dollar v\mid u\in \Sigma_1^*, v\in\Sigma_2^*,
  \text{ and } |u|=|v|\,\}$} is a language
accepted by some $s(n)\htrans$ with $s(n)\in O(n)$,
where $\Sigma_1$ is an alphabet not containing the symbol $\dollar$
and $\Sigma_2$ is an arbitrary alphabet. Even in this case, only a tiny modification
shows that the identity function is constructible.
These facts together with Proposition~\ref{prop:mult-constructible}
yield, in particular, that the function $f(n)=n^x$
is constructible for all positive integers~$x$.

In what follows, we will use the fact, proved in \cite{kutrib:2022:dciufst}, that the copy language with center marker
\mbox{$\{\, u\dollar u\mid u\in\{a,b\}^*\,\}$} is accepted by
some $s(n)\htrans$ satisfying $s(n)\in O(n)$.
The next theorem provides some language
that separates the levels of the hierarchy.

\begin{theorem}\label{theo:positive-hier}
Let $f\colon\mathbb{N}\to\mathbb{N}$ be a constructible function,
$T_f$ be a constructor for~$f$ with input alphabet $\Sigma\cupdot\{a\}$,
and $b$ be a new symbol not belonging to $\Sigma\cupdot\{a\}$.
Then language
$$
L_f=\{\, u\dollar u v \mid u\in \{a,b\}^*,\, v\in\Sigma^*,\,
a^{2|u|+1} v \in L(T_f)\, \}
$$
is accepted by some $s(n)\htrans$ with $s(n)\in O(f^{-1}(n))$.
\end{theorem}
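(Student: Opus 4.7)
The plan is to design an $s(n)\htrans$ $T$ that uses a two-track tape representation, in the same spirit as the constructors of Propositions~\ref{prop:add-constructible} and~\ref{prop:mult-constructible}. On Track~1, $T$ runs the $\trans$ from~\cite{kutrib:2022:dciufst} that accepts the copy language $\{u\dollar u \mid u\in\{a,b\}^*\}$ in $O(n)$ sweeps, restricted to the prefix $u\dollar u$ of the input. On Track~2, $T$ runs the constructor $T_f$ on the string obtained from $u\dollar u v$ by rewriting each of the first $2|u|+1$ symbols as $a$, i.e., on $a^{2|u|+1}v$, which is a legal input to $T_f$ since its alphabet is $\Sigma\cupdot\{a\}$.

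During the initial sweep, $T$ scans the input to verify that it has the syntactic shape $\{a,b\}^*\dollar\{a,b\}^*\Sigma^*$ with a unique occurrence of $\dollar$, and simultaneously sets up the two tracks: Track~2 is initialized with $a$'s in positions $1,\ldots,2|u|+1$ and with the untouched symbols of $v$ thereafter, while Track~1 keeps $u\dollar u$ followed by a distinguished filler on the $v$-positions that the copy-language simulation treats as its own right-endmarker. On every subsequent sweep, $T$ advances both simulations in lockstep via pairs of states; if one of them halts, only the other continues. The transducer $T$ accepts when, at the end of some sweep, both component simulations are simultaneously in an accepting configuration.

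Correctness is immediate from the construction: $T$ accepts exactly when the prefix has the form $u\dollar u$ and $a^{2|u|+1}v \in L(T_f)$, which is the defining condition of $L_f$. For the sweep complexity on an accepted input of length $n=2|u|+1+|v|$ with $|v|=f(2|u|+1)$, the copy-language check takes $O(|u|)$ sweeps and the $T_f$ simulation takes $O(f^{-1}(n))$ sweeps by the definition of constructor. Since $f(m)\geq m$ yields $n\geq f(2|u|+1)$ and hence $f^{-1}(n)\geq 2|u|+1$, we obtain $|u|\in O(f^{-1}(n))$, so the overall sweep complexity is $O(f^{-1}(n))$, as required.

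The main subtlety is keeping the two simulations from interfering on the shared tape and aligning the end of the copy-language region with the start of the $T_f$ computation on the suffix $v$. This is handled by the standard product construction over an enlarged alphabet encoding pairs of track symbols, together with the one-time format check in the first sweep that pins down the position of $\dollar$ and thereby anchors the boundary between the copy-language region and the $T_f$ region for all subsequent sweeps.
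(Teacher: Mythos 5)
Your proposal matches the paper's proof in all essentials: a two-track construction running the $O(n)$-sweep acceptor for the copy language $\{u\dollar u\mid u\in\{a,b\}^*\}$ on the prefix (using the first symbol of $\Sigma$ as its endmarker) in parallel with the constructor $T_f$ on the suffix (treating the prefix symbols as $a$'s), accepting when both simulations succeed, with the same sweep-count analysis showing $2|u|+1\in O(f^{-1}(n))$. No substantive differences; the argument is correct.
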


\begin{proof}
Since the suffix $v$ of a word $w\in L_f$ must be the suffix of
some word $a^{2|u|+1} v$ in $L(T_f)$, we have that
$|v|=f(2|u|+1)$ and \mbox{$|w|=2|u|+1+f(2|u|+1)$.} Since
$s(n)$ is claimed to be of order $O(f^{-1}(n))$, an $s(n)\htrans$
accepting $L_f$ may perform at least $O(2|u|+1)$ many sweeps.

An $s(n)\htrans$ $T$ accepting $L_f$ essentially combines
in parallel the acceptors for the copy language with center marker
and the language $L(T_f)$.
To this end, $T$ establishes two tracks in its output.
On the first track $T$ simulates an acceptor
for the copy language $\{\, u\dollar u\mid u\in\{a,b\}^*\,\}$,
where the first symbol of $\Sigma$ (i.e., the first symbol of $v$)
acts as endmarker. In this way, the prefix~$u\dollar u$ is verified.
The result of the computation is written to the output track.
This task takes $O(2|u|+1)$ sweeps.
On the second track $T$ simulates the constructor $T_f$,
where all symbols up to the first symbol of $\Sigma$
(i.e., all symbols of the prefix $u\dollar u$) are treated
as input symbols~$a$.
In this way, $T$ verifies that $|v|=f(2|u|+1)$.
The result of the computation
is written to the output track.
This task takes $O(2|u|+1)$ sweeps.

Finally, $T$ rejects whenever one of the above simulations ends rejecting.
Instead, $T$ accepts if it detects positive simulation results of the two
tasks on the tracks.
\end{proof}

To show that the witness language
$L_f$ of Theorem~\ref{theo:positive-hier} is not accepted by any
$s(n)\hntrans$ with $s(n)\in o(f^{-1}(n))$,
we use Kolmogorov complexity and incompressibility arguments.
General information on this technique
can be found, for example, in the textbook~\cite[Ch.~7]{li:1993:itkca:book}.
Let $w\in \{a, b\}^*$ be an arbitrary binary string. Its Kolmogorov
complexity~$C(w)$ is defined to be the minimal size of a
binary program (Turing machine) describing~$w$. The following key fact
for using the incompressibility
method is well known:
there exist binary strings~$w$ of \emph{any} length such that $|w| \leq C(w)$.

\begin{theorem}\label{theo:negative-hier}
Let $f\colon\mathbb{N}\to\mathbb{N}$ be a constructible function,
$T_f$ be a constructor for~$f$ with input alphabet $\Sigma\cupdot\{a\}$,
and $b$ be a new symbol not belonging to $\Sigma\cupdot\{a\}$.
Then language
$$
L_f=\{\, u\dollar u v \mid u\in \{a,b\}^*,\, v\in\Sigma^*,\,
a^{2|u|+1} v \in L(T_f)\, \}
$$
cannot be accepted by any $s(n)\hntrans$ with $s(n)\in o(f^{-1}(n))$.
\end{theorem}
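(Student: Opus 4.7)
The plan is to combine a crossing-sequence splicing argument with Kolmogorov incompressibility, applied at the boundary between the $\dollar$ cell and the first cell of the second copy of $u$. For each sufficiently large $\ell$, I would pick a Kolmogorov-incompressible $u\in\{a,b\}^\ell$ with $C(u)\ge\ell$, set $m=2\ell+1$, and let $v$ be the lexicographically first string in $\Sigma^{f(m)}$ satisfying $a^mv\in L(T_f)$; such a $v$ exists by constructibility of $f$. Then $w=u\dollar uv\in L_f$ has length $n=m+f(m)$, so any $\ntrans$ $T$ claimed to accept $L_f$ in sweep complexity $s(n)\in o(f^{-1}(n))$ must admit an accepting computation on $w$ using at most $r\le s(n)$ sweeps.

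Next I would prove a splicing lemma. Fix such an accepting computation and define its crossing sequence $\tau=(q_1,\dots,q_r)\in Q^r$ as the tuple of states in which the head enters the first cell of the second copy of $u$ (from its $\dollar$-neighbour) during sweeps $1,\dots,r$. Because each sweep proceeds strictly left-to-right, the left-side computation (on cells up to and including the $\dollar$) is determined solely by the initial left content $u\dollar$ and by the nondeterministic choices made there, while the right-side computation (on cells from the second copy of $u$ to the right endmarker) is determined solely by the initial right content $uv\rightend$, by the entry states $q_1,\dots,q_r$, and by the right-side nondeterministic choices. Hence, if some $u'\ne u$ in $\{a,b\}^\ell$ admitted an accepting computation of $T$ on $u'\dollar u'v$ with the same crossing sequence $\tau$, one could splice the left half of the first computation with the right half of the second to obtain a valid accepting computation of $T$ on $u\dollar u'v$; acceptance is inherited because it is witnessed by the final state reached at the right endmarker, which lies in the right half. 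But $u\dollar u'v\notin L_f$ whenever $u'\ne u$, a contradiction. Therefore $u$ is uniquely determined by $\tau$ among all strings of $\{a,b\}^\ell$.

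Armed with this uniqueness, I would turn $\tau$ into a short description of $u$: a fixed decoder that knows $\ell$ and the tables of $T$ and $T_f$ reconstructs $m=2\ell+1$, computes the canonical $v$ via $T_f$, and enumerates $\{a,b\}^\ell$ to find the unique $u'$ for which $T$ has an accepting computation on $u'\dollar u'v$ with crossing sequence $\tau$ at the chosen boundary. This yields $C(u)\le r\lceil\lg|Q|\rceil+O(\lg\ell)+O(1)\le s(n)\lceil\lg|Q|\rceil+O(\lg\ell)$. Combined with $C(u)\ge\ell$, it follows that $s(n)\in\Omega(\ell)=\Omega(m)$. Finally, since $f(m)\ge m$ one has $n=m+f(m)\le 2f(m)$, so $m\ge f^{-1}(n/2)\in\Theta(f^{-1}(n))$ by the regularity properties underlying constructible functions, giving $s(n)\in\Omega(f^{-1}(n))$ along the subsequence of lengths $\{m+f(m):m\ge 1\}$, contradicting $s(n)\in o(f^{-1}(n))$.

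The main technical obstacle will be to make the splicing argument fully rigorous in the nondeterministic iterated-transducer setting: once the crossing sequence at the chosen boundary is fixed, the left- and right-side sub-computations must genuinely decouple into independent pieces that can be freely recombined, and the acceptance condition (halting in $F$ at the end of some sweep) must survive the splice. Both points hinge critically on the strictly one-way character of each sweep and on the fact that acceptance is detected at the right-end configuration, so that the accepting sweep is supplied intact by whichever original input contributes the right half.
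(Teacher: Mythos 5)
Your proposal is correct and follows essentially the same route as the paper: choose an incompressible $u$, record the $o(f^{-1}(n))$ states in which the accepting computation enters the cell just after the $\dollar$, and use this crossing sequence plus $|u|$ and the (constant-size) descriptions of $T$ and $T_f$ as a short program for $u$, contradicting $C(u)\ge|u|$. The only cosmetic differences are that you make the left/right splicing lemma explicit (the paper leaves it implicit in the claim that an accepting right-half simulation forces $u'=u$) and fix a canonical $v$ instead of enumerating the pairs $u'v'$.
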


\begin{sloppypar}
\begin{proof}
Contrarily, let us assume that  some $s(n)\hntrans$
\mbox{$T=\langle Q,\Sigma\cupdot\{a,b\},\Delta,q_0,\rightend,\delta,F \rangle$}
with sweep complexity $s(n) \in o(f^{-1}(n))$ accepts~$L_f$.

We choose a word $u \in \{a,b\}^*$ long enough such that
$C(u)\geq |u|$. Then, we consider an accepting computation of $T$ on
$u\dollar u v$, and derive a contradiction by
showing that $u$ can be compressed via $T$.
To this end, we describe a program~$P$ which reconstructs $u$
from a description of $T$, the length $|u|$, and the sequence
of the $o(f^{-1}(n))$ many states $q_1,q_2,\dots, q_r$ entered along
the accepting computation at that moments when
$T$ reads the first symbol after the $\dollar$ along its $o(f^{-1}(n))$
sweeps, $n$ being the total length of the input.

\medskip
Basically, the program~$P$ takes the length $|u|$ and
enumerates the finitely many words $u'v'$ with
$u'\in \{a,b\}^{|u|}$, $v'\in\Sigma^*$, and $|v'|=f(2|u|+1)$.
Then, for each word in the list,
it simulates by dovetailing all possible
computations of $T$ on~$u'v'$ where, in particular,
it simulates the $o(f^{-1}(n))$ successive partial sweeps of $T$
on~$u'v'$, where the $i$th sweep is started in state $q_i$
for $1\leq i\leq r$. If the simulation ends accepting,
we know that $u\dollar u'v'$ belongs to $L_f$ and, thus,
$u'=u$.

Let us consider the Kolmogorov complexity of~$u$. Let $|T|$ denote the constant size
of the description of~$T$, and~$|P|$ denote the constant size of the
program~$P$ itself. The binary description of the length $|u|$ takes
$O(\lg(|u|))$ bits. Each state of $T$ can be encoded by
$O(\lg(|Q|))$ bits.
So, we have
$$
C(u) \in |P|+|T|+O(\lg (|u|)+o(f^{-1}(n))\cdot O(\lg(|Q|))
= O(\lg (|u|))+o(f^{-1}(n)).
$$
Since $n=2|u|+1+f(2|u|+1)$ and $f(n)\geq n$, for all $n\geq 1$,
we have \mbox{$n\in\Theta(f(2|u|+1))$.} So,
we can conclude that $C(u)\in O(\lg (|u|))+o(|u|)=o(|u|)$.
This contradicts our initial assumption $C(u)\!\ge\!|u|$, for $u$ long enough. Therefore,
$T$ cannot accept $L_f$ with sweep complexity $o(f^{-1}(n))$.
\end{proof}
\end{sloppypar}

We would like to remark that, due to our observation
that all functions $f(n)=n^x$ are constructible for $x\geq 1$,
it is an easy application of the above theorems to obtain
the following infinite hierarchies with regard to the number of sweeps
both in the deterministic and the nondeterministic case.
Namely:
{\em For every $x \ge 1$ we have that the set of all languages that are accepted by
$s(n)\htrans$s ($s(n)\hntrans$s)
with $s(n) \in O(n^{1/(x+1)})$ is properly included in
the set of all languages that are accepted by $s(n)\htrans$s ($s(n)\hntrans$s)
with $s(n) \in O(n^{1/x})$.}

%\medskip
Finally, we turn to an upper bound of the computational capacity of
$\ntrans$s, where we do not limit the number of sweeps at all.
Yet, we quickly observe that an $\ntrans$  is clearly a restricted version of a linear bounded
automaton (see, e.g., \cite{Hopcroft:1979:itatlc:book}). So, any language accepted by an
$\ntrans$ is context-sensitive. However, we can also show the converse though
the transducers are one-way devices only that, at a glance, cannot transmit
information from right to left. The proof uses the simulation of linear bounded
automata ($\lba$).

For a given $\lba$~$M$, we denote its state set by $Q$ where $q_0$ is the
initial state, by $T$ its tape alphabet containing the endmarkers~$\leftend$
and $\rightend$, and by $\Sigma \subset T\setminus\{\leftend,\rightend\}$
its input alphabet. The set of accepting states is $F\subseteq Q$ and
the transition function $\delta$ maps from $Q\times T$ to the subsets
of $Q\times (T\cup\{-1,+1\})$. So, the $\lba$ $M$ either can rewrite
the current tape cell or move its head to the left ($-1$) or to the right
($+1$). We may safely assume that an $\lba$ never moves its head beyond
the endmarkers, starts with its head on the left endmarker, always halts,
and accepts only on the right endmarker by halting in an
accepting state.

\begin{theorem}\label{theo:all-csl}
Let $L$ be a context-sensitive language. Then $L$ is accepted by
an $\ntrans$.
\end{theorem}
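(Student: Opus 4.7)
The plan is to simulate a given LBA $M$ for $L$ by an $\ntrans$ $T$ that performs exactly one step of $M$ per sweep. Each LBA configuration is encoded on $T$'s tape via a composite output alphabet $\Delta$ consisting of the LBA tape symbols together with ``marked'' copies $(a,q) \in T \times Q$, where a cell $(a,q)$ means that the head of $M$ rests here in state $q$ reading $a$. A configuration $a_1 \cdots a_{i-1} q a_i \cdots a_n \rightend$ is thus written as a single tape string of the same length as the LBA's working region. The first sweep of $T$ converts its input $w_1 \cdots w_n \rightend$ into an encoded initial configuration $(w_1, q_0') w_2 \cdots w_n \rightend$; without loss of generality I assume $M$ does not consult $\leftend$ during the main computation, folding any initial moves over $\leftend$ into the first sweep's finite control.

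Each subsequent sweep simulates one step of $M$. At its start, $T$ nondeterministically chooses either to (i) guess a transition $t$ of $M$ to apply, or (ii) run an acceptance check. In mode (i), $T$ scans left to right, copying cells unchanged until it meets the unique head marker. For a rewrite $(q,a) \to (q',a')$, the head $(a,q)$ is replaced by $(a',q')$. For a right move $(q,a) \to (q',+1)$, the head $(a,q)$ becomes plain $a$ and the very next cell with symbol $b$ becomes $(b,q')$. For a left move $(q,a) \to (q',-1)$, $T$ nondeterministically pre-guesses some earlier cell as the new head location, writes $(b,q')$ there (where $b$ was the symbol), and then demands that the immediately following cell carry the old head $(a,q)$, which is rewritten as plain $a$; branches that fail any of these checks abort, as do branches that never find the head or find it with mismatched data. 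In mode (ii), $T$ verifies that the tape encodes a halted accepting configuration with the head on the $\rightend$ cell, and enters an accepting state at the sweep's end.

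The crux is handling left moves on a one-way device. Since $T$ only learns the old head's position after it passes it, it cannot install the new head to the left within the same sweep without some device; the pre-guess-and-verify trick resolves this cleanly precisely because a single left move shifts the head by exactly one cell, so the guess is checkable on the immediately following position. Once this is in place, correctness $L(T) = L(M) = L$ follows by a straightforward induction on the number of simulated LBA steps: every accepting computation of $T$ projects to an accepting computation of $M$, and conversely every accepting computation of $M$ is realised by some sequence of nondeterministic choices of $T$.
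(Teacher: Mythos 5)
Your construction is correct and follows essentially the same route as the paper's proof: one LBA step per sweep, configurations encoded by marking the head cell with the current state, and the key guess-and-verify trick that nondeterministically installs the new head one cell early and confirms the old head on the immediately following cell in order to simulate left moves on a one-way device. The only cosmetic differences are that the paper folds the left endmarker into the first tape cell via a compound symbol rather than assuming it away, and detects acceptance within the sweep that produces the accepting configuration rather than in a separate verification sweep.
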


\begin{proof}
Let $L$ be accepted by some nondeterministic $\lba$
$M=\langle Q,\Sigma,T,q_0,\leftend,\rightend,\delta,F \rangle$.
Since $M$ is assumed to be halting, its computations are
finite sequences of configurations passed through. We will construct
an $\ntrans$ $T=\langle Q',\Sigma, \Delta,p_0,\rightend,\delta',F'\rangle$
that simulates $M$ such that these configurations are
successively emitted one in each sweep. So, the number of sweeps
taken by $T$ is one more than the number of steps performed by $M$.

We set $Q'=Q\cup \hat{Q} \cup \{p_0,p_s,p_1,p_+\}$, where
$\hat{Q}=\{\, \hat{q}\mid q\in Q\,\}$ is a disjoint copy of $Q$,
and $F=\{p_+\}$. The basic idea is to represent
a configuration of $M$ by two tracks.
The current tape content of~$M$ is written on the second track, while
the only nonblank cell of the first track is the cell currently scanned
by the input head and its content is the current state of~$M$.
However, since $M$ has a left endmarker that can be visited an
arbitrary number of times, but $T$ does not have a left endmarker,
the first symbol on $T$'s tracks represents the left endmarker
in addition.

So, we set $\Delta=
((Q\cup\{\blank\}) \times \{\leftend\})\times ((Q\cup\{\blank\}) \times T)
\cup ((Q\cup\{\blank\}) \times T) \cup \Sigma
\cup\{\rightend\}$.

\medskip
The initial input of $T$ is of the form $\Sigma^*\rightend$. Here we assume
that the input is non-empty. The construction can straightforwardly be
extended to handle an empty input word as well.
During its first sweep, $T$ splits this input into two tracks.
To this end, for $x\in \Sigma \cup \{\rightend\}$ we define:

\smallskip
$
\begin{array}[t]{r@{\ }rcl}
 (1) & \delta'(p_0,x) &=& \{(p_s, ((q_0,\leftend),(\blank, x)))\}\\
 (2) & \delta'(p_s,x) &=& \{(p_s, (\blank, x))\}\\
\end{array}
$
\smallskip

During successor sweeps, $T$ uses its initial state $p_0$ to get close
to the position of the cell with non-empty first track. In each step
in state $p_0$ the $\ntrans$ $T$ guesses whether it reads the symbol to
the left of the non-empty first track and whether $M$ performs a left move
(the states from $\hat{Q}$ are used for this purpose).
Should $T$ read the non-empty first track in state $p_0$ it has guessed that
$M$ does not perform a left move and simulates the stationary or right move
with the help of the states from $Q$. Subsequently, $T$ enters state $p_1$
to reach the right endmarker for a new sweep. So, for
$x\in T\setminus\{\leftend\}$ we define:

\smallskip
$
\begin{array}[t]{r@{\ }rcl}
 (3) & \delta'(p_0,((\blank, \leftend),(\blank,x))) &=&
\{(p_0,((\blank,\leftend),(\blank, x)))\} \cup
\{\, (\hat{q}, ((\blank,\leftend),(q, x))) \mid q\in Q\,\}\\
(4) & \delta'(p_0,(\blank,x)) &=&
   \{(p_0,(\blank, x))\} \cup
  \{\, (\hat{q}, (q, x)) \mid q\in Q\,\}\\
\end{array}
$
\smallskip

Transitions (3) and (4) implement the guessing of a left move of $M$ on the
position coming next, where $M$ enters state $q$.
Once in some state $\hat{q}$ the guess is verified by the following
Transition (5). If the guess was wrong, the transition is undefined and the
computation halts rejecting.

\smallskip
$
\begin{array}[t]{r@{\ }rcl}
(5) & \delta'(\hat{q},(r,x)) &=&
        \{ (p_1, (\blank, x))\} \text{ if } (q, -1)\in \delta(r,x)\\
\end{array}
$
\smallskip

The next transitions implement the simulation of stationary and right moves
of $M$ when $T$ reaches the non-empty first track in state $p_0$.
Let $u\in T\setminus\{\rightend\}$.

\smallskip
$
\begin{array}[t]{r@{\ }rcl}
(6) & \delta'(p_0,(q, u)) &=&
        \{\, (p_1, (r,z)) \mid (r,z)\in \delta(q,u)\,\}\cup
        \{\, (r, (\blank,u)) \mid (r,+1)\in \delta(q,u)\,\}\\

(7) & \delta'(p,(\blank, x)) &=& \{(p_1, (p,x))\}\\
\end{array}
$
\smallskip

Next, $T$ uses state $p_1$ to proceed to the end of the track.

\smallskip
$
\begin{array}[t]{r@{\ }rcl}
(8) & \delta'(p_1,(\blank, x)) &=& \{ (p_1, (\blank,x)\}\\
\end{array}
$
\smallskip

The next transition deals with steps that $M$ may perform on its right
endmarker in state $p_0$. It complements Transition (6).

\smallskip
$
\begin{array}[t]{r@{\ }rcl}
(9) & \delta'(p_0,(q, \rightend)) &=&
        \{\, (p_1, (r,\rightend)) \mid (r,\rightend)\in \delta(q,\rightend)\,\}\cup
        \{\, (p_+, (\blank,\rightend)) \mid q\in F\,\}
\end{array}
$
\smallskip

So, if and only if $M$ halts accepting (which appears only on the right
endmarker) $T$ enters its sole accepting state $q_+$ at the end of a
sweep.
It remains to extend the definition of $\delta'$ for some
cases concerning the first ``double'' symbol untreated so far.

\smallskip
$
\begin{array}[t]{r@{\ }rcl}
 (10) & \delta'(p_0,((q, \leftend),(\blank,x))) &=&
\{\,(p_1,((r,\leftend),(\blank, x))) \mid
(r,\leftend)\in\delta(q,\leftend)\,\}\\
 & & & \mathrel{\cup} \{\,(p_1,((\blank,\leftend),(r, x))) \mid (r,+1)\in\delta(q,x)\,\}\\
 (11) & \delta'(p_0,((\blank, \leftend),(q,x))) &=&
\{\,(p_1,((\blank,\leftend),(r, y))) \mid (r,y)\in\delta(q,x)\,\}\\
 & & & \mathrel{\cup} \{\,(p_1,((r,\leftend),(\blank, x))) \mid (r,-1)\in\delta(q,x)\,\}\\
 & & & \mathrel{\cup} \{\,(r,((\blank,\leftend),(\blank, x))) \mid (r,+1)\in\delta(q,x)\,\}\\
\end{array}
$\\
\strut

\vspace*{-8mm}
\end{proof}

\section{Nondeterminism beats determinism on all levels}\label{sec:ndet-beats-det}

We now turn to compare the computational power of $\trans$s and $\ntrans$s.
Since for sweep bounds of order $o(\lg n)$ both variants accept
regular languages only (see \cite{kutrib:2022:dciufst,Hartmanis:1968:ccottmc}), it remains to consider sweep bounds
beyond $o(\lg n)$. Here, we will show that there exist witness languages
that are accepted by some nondeterministic
$s(n)\hntrans$ with $s(n)\in O(\lg n)$, but cannot be accepted
by any deterministic $s(n)\htrans$ with $s(n)\in o(n)$,
thus separating determinism from nondeterminism for
almost all levels of the sweep hierarchy.

\medskip
For any integer $k\geq 1$, let $\bin_k\colon \set{0,1,2,\dots, 2^k-1}\to \{0,1\}^k$ map
any integer in the range from $0$ to $2^k-1$ to its
binary representation of length $k$, starting from the left with the least significant digit and possibly
completed with zeroes to the right.
E.g., $\bin_4(5)=1010$ and $\bin_4(12)=0011$.
We consider the language
\begin{multline*}
D=\{\,a^kb^{2^k} \bin_k(0)u_0 \bin_k(1)u_1 \cdots \bin_k(2^k-1)u_{2^k-1} \bin_k(i)u_{i} \mid\\
k\geq 2,\, 1\leq i\leq 2^k-1,\, u_j\in\{a,b\}^k \text{ for all } 1\leq j\leq 2^k-1 \,\}.
\end{multline*}

\begin{theorem}\label{theo:det-net-positive}
The language~$D$ can be accepted by an
$s(n)\hntrans$ satisfying {$s(n)\in O(\lg n)$.}
\end{theorem}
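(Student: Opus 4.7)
The plan is to design an $\ntrans$ $T$ of sweep complexity $O(k) = O(\lg n)$ that accepts $D$, exploiting the fact that any $w \in D$ has length $n = k + 2^k + (2^k+1)\cdot 2k \in \Theta(k \cdot 2^k)$, and hence $\lg n \in \Theta(k)$. So $O(k)$ sweeps is exactly within our budget.

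In the first sweep, $T$ verifies the gross alphabet pattern $a^+ b^+ (\{0,1\}^+\{a,b\}^+)^+$, marks the boundaries between the $a$-, $b$-, and block sections, nondeterministically guesses one of the middle blocks $\bin_k(j)u_j$ as the candidate to match the final block $\bin_k(i)u_i$ and marks its extent, duplicates the $a$-prefix onto several parallel tracks to be used as independent counters, and arranges the output on these tracks so that later subroutines do not interfere. In the remaining $O(k)$ sweeps, $T$ runs three verification tasks concurrently on distinct tracks.

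Task A checks that $|b\text{-segment}| = 2^{|a\text{-prefix}|}$ via the halving trick used elsewhere in the paper: at each sweep one symbol is removed from the $a$-prefix while the surviving $b$'s are paired up so that every other one is blanked; after $k$ halvings the $a$'s must be exhausted and exactly one unblanked $b$ must remain. Task B verifies that the $\bin_k$ prefixes of the middle blocks form the sequence $\bin_k(0), \bin_k(1), \ldots, \bin_k(2^k-1)$: it checks that the first prefix equals $0^k$ and has the same length as the $a$-prefix, that the last middle prefix equals $1^k$, and, one bit position per sweep, that consecutive prefixes are related by binary successor; the bit position currently under scrutiny is indicated by a separate $a$-prefix copy that is decremented by one symbol per sweep, and for each bit position a constant-size local comparator between adjacent prefixes suffices, since only the previous prefix's bit at position $p$ and the flag ``bits $0,\ldots,p-1$ of the previous prefix were all $1$'' need to cross a block boundary. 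Task C compares the marked middle block with the final block symbol-by-symbol: over $2k$ sweeps, using yet another counter copy of the $a$-prefix to track the current symbol offset, the relevant symbol of the marked block is carried in the finite-state control through the intervening blocks and matched against the corresponding symbol of the final block.

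The main obstacle I anticipate is ensuring that each suffix $u_j$ has length exactly $k$, which is not directly enforced by Tasks A--C in isolation. I would handle this by relying on the success of Task B, which pins every $\bin_k(j)$ prefix to have length exactly $k$, together with the alphabet transitions from $\{0,1\}$ to $\{a,b\}$ that delimit each suffix, and a block-count/total-length consistency check carried across sweeps starting from the first sweep; together these constraints force each $u_j$ to have length $k$. The essential role of nondeterminism is in Task C: by guessing the candidate matching block up front, we replace an otherwise prohibitive search through all $2^k$ middle blocks by a single pair-of-blocks comparison, which is precisely what makes the $O(\lg n)$ sweep bound achievable.
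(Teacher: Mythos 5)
Your overall route coincides with the paper's: a first-sweep structure check, the halving trick that consumes one $a$ per sweep while pairing off $b$'s to certify $|b\text{-segment}|=2^k$, a nondeterministic guess of the middle block to be matched against the final one, and a $2k$-sweep symbol-by-symbol comparison, for $O(k)=O(\lg n)$ sweeps since $n\in\Theta(k2^k)$. The one genuinely different ingredient is how you certify that the $\{0,1\}$-prefixes form the ascending sequence $\bin_k(0),\ldots,\bin_k(2^k-1)$. The paper duplicates the input onto a second track, shifts that track right by one cell per sweep for about $2k$ sweeps until each $\{0,1\}$-block is aligned with its right neighbour, and then verifies in a single sweep that each shifted block plus one equals the block above it. You instead check the binary-successor relation in place, one bit position per sweep, letting only the previous prefix's bit at position $p$ and an ``all lower bits were $1$'' flag cross each block boundary in the finite control. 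Both are sound and both cost $O(k)$ sweeps; yours dispenses with the shifted second track, the paper's dispenses with the carry logic.

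Two places in your sketch lean on a mechanism you never name. To examine bit position $p$ of \emph{every} prefix during sweep $p$, a one-way finite-state device cannot count up to $p$ inside each block ($p$ grows with $k$), and your decrementing $a$-prefix counter only communicates $p$ at the far left of the tape. What you need is exactly the paper's marking device: in every sweep mark the leftmost unmarked symbol of every block, so that in sweep $p$ the marking frontier of each block sits at position $p$. The same device closes your acknowledged gap about the lengths of the $u_j$: after the $k$th sweep the symbol being marked must be the last one of every block (both the $\{0,1\}$- and the $\{a,b\}$-blocks), which pins every block to length exactly $k$. Your ``block-count/total-length consistency check'' is not a concrete mechanism and should be replaced by this. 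With that repair the construction goes through.
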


\begin{proof}
We sketch the construction of an $s(n)\hntrans$ $T$ that accepts $D$
with $s(n)\in O(\lg n)$.
The basic idea of the construction is to use two output tracks.
So, during its first sweep, $T$ splits the input into two tracks, each one getting
the original input. In addition, $T$ verifies if
the structure of the input is correct, that is, if the input
is of the form $a^+b^+ 0^+\{a,b\}^+(\{0,1\}^+\{a,b\}^+)^+1^+\{a,b\}^+$
with at least two leading $a$'s. If the form is incorrect,
$T$ rejects.

In subsequent sweeps, $T$ behaves as follows. The original input on
the first track is kept but the symbols can be marked, while on the second
track the input is successively
shifted to the right. More precisely, in any sweep the first unmarked
symbol $a$ in the leading $a$-block is marked. In the following
$b$-block, every second unmarked symbol $b$ is marked. In the further course
of the sweep, the leftmost unmarked symbol in any $\{0,1\}$-block as well as in any
$\{a,b\}$-block is marked. On the second track, the input is shifted to the right
by one symbol, whereby the last symbol is deleted and some blank symbol is
added at the left.

Let $k\geq 2$ be the length of the leading $a$-block. When the last of its
symbols is marked, $T$ checks in the further course of the sweep
whether in the following $b$-block exactly one symbol remains unmarked,
and whether in all remaining blocks the last symbol is being marked. Only in
this case the computation continues. In all other cases $T$ halts rejecting.

From the construction so far, we derive that if the computation continues
then all but the second block have the same length, namely, length~$k$.
Moreover, since in the second block every second unmarked symbol has been
marked during a sweep and one symbol is left, the length of the block
is~$2^k$.

Next, $T$ continues to shift the content of the second track to the right
until the $\{0,1\}$-blocks are aligned with their neighboring $\{0,1\}$-blocks
(except for the last one). This takes another~$k$ sweeps.
In the next sweep, $T$ checks if the $\{0,1\}$-block on the second track
represents an integer that is one less than the integer represented by the
aligned block on the first track. This can be done by adding one on
the fly and comparing the result with the content on the first track.
Only if the check is successful, $T$ continues. Otherwise, it halts rejecting.
In the former case, we get that the sequence of $\{0,1\}$-blocks represents
the numbers from $0$ to $2^k-1$ in ascending order.

In the next sweep, $T$ guesses the $\{0,1\}$-block that has to match
the rightmost $\{0,1\}$-block and marks it appropriately. Finally,
this block together with its following $\{a,b\}$-block is symbolwise
compared with the last $\{0,1\}$-block together with its following
$\{a,b\}$-block in another $2k$ sweeps. To this end, we note that~$T$
can detect that the last
block follows when it scans a $\{0,1\}$-block consisting of~$1$'s only.
For the comparison, the symbols can further be marked appropriately.

Now, $T$ accepts only if the guessed $\{0,1\}$-block together
with its following $\{a,b\}$-block match the last $\{0,1\}$-
and $\{a,b\}$-block. Otherwise $T$ rejects.
The construction shows that for any word from~$D$ there is
one accepting computation and that only words from~$D$ are
accepted. So, $T$ accepts $D$.

Altogether, $T$ performs at most
$1+k+k+1+2k\in O(k)$ sweeps. The length of the input is
$k+2^k+(2^k+1)\cdot 2k=O(k2^k)$. Since $\lg(O(k2^k))\in O(k)$,
the $\ntrans$ $T$ obeys the sweep bound $s(n)\in O(\lg n)$.
\end{proof}

To show that the witness language $D$ is not accepted by any
$s(n)\htrans$ with $s(n)\in o(n)$, we use again Kolmogorov complexity
and incompressibility arguments.

\begin{theorem}\label{theo:det-net-negative}
The language~$D$ cannot be accepted by
any $s(n)\htrans$ satisfying {$s(n)\in o(n)$.}
\end{theorem}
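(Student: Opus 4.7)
The plan is to adapt the incompressibility argument of Theorem~\ref{theo:negative-hier}, but to exploit a crossing-sequence invariance that is specific to deterministic one-way devices. Suppose for contradiction that a deterministic $s(n)\htrans$ $T$ with $s(n)\in o(n)$ accepts~$D$. Fix a large $k$, pick an incompressible string $U=u_1u_2\cdots u_{2^k-1}\in\{a,b\}^{(2^k-1)k}$ with $C(U)\ge(2^k-1)k$, and set $u_0=a^k$. For each index $i\in\{1,\dots,2^k-1\}$, the word
$$
w_i=a^kb^{2^k}\bin_k(0)u_0\bin_k(1)u_1\cdots\bin_k(2^k-1)u_{2^k-1}\bin_k(i)u_i
$$
lies in~$D$, has total length $n\in\Theta(k2^k)$, and, crucially, shares with every other $w_{i'}$ the same prefix of length $p=n-2k$, differing only in the final $2k$-symbol block.

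The central observation is the following crossing-sequence invariance. Since $T$ is deterministic and each sweep proceeds strictly left to right, the tape content to the left of position~$p$ at the start of any sweep is a function of the shared prefix alone; consequently, the sequence $\tau=(q_1,q_2,\dots)$ of states in which $T$ arrives at position~$p$ in successive sweeps is identical across all~$w_i$. Moreover, because every $w_i$ is accepted (so no sweep can halt to the left of~$p$), each $q_\ell$ is well defined for $\ell$ up to the number of sweeps actually performed, which is bounded by $s(n)$. As $|Q|$ is a fixed constant, $\tau$ admits an encoding of $O(s(n))=o(n)=o(k2^k)$ bits.

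Given $\tau$, a description of~$T$, and the integer~$k$, I design a program $P$ that reconstructs~$U$: for each $j\in\{1,\dots,2^k-1\}$ and each $v\in\{a,b\}^k$, $P$ simulates $T$'s behaviour on the right of~$p$ using~$\tau$ to supply the entry state at~$p$ in each sweep, generating the successive right-hand tape contents; it declares $u_j=v$ whenever this simulation reaches the endmarker in an accepting state within at most $s(n)$ sweeps. Since $T$ accepts precisely~$D$, the only $v$ for which $a^kb^{2^k}\bin_k(0)u_0\cdots\bin_k(2^k-1)u_{2^k-1}\bin_k(j)v$ lies in~$D$ is $v=u_j$, so each $u_j$ is recovered uniquely and concatenation yields~$U$.

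Counting bits, $P$'s encoding together with its inputs has total size $O(1)+O(\lg k)+|\tau|=o(k2^k)$, which contradicts $C(U)\ge(2^k-1)k=\Omega(k2^k)$ for $k$ sufficiently large. The main subtlety to be handled is the careful justification of both the well-definedness and the prefix-only dependence of~$\tau$: I must rule out that some sweep halts to the left of~$p$ (otherwise the shared prefix would force all~$w_i$ to share that rejecting fate, contradicting $w_i\in D$), and must argue that the behaviour of~$T$ on the right of~$p$ is fully determined by~$\tau$ together with the current right-hand tape contents, so that the inner simulation inside~$P$ faithfully reproduces the original acceptance decision.
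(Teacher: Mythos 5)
Your proposal is correct and follows essentially the same route as the paper's proof: an incompressibility argument in which determinism guarantees that the sequence of states entered at the boundary between the long common prefix and the final $2k$-symbol block $\bin_k(i)u_i$ is the same for every $i$, so that $o(n)$ states plus $k$ and a description of $T$ suffice to recover all the $u_j$ by exhaustive testing of suffixes. Your explicit justification of the crossing-sequence invariance (and the minor variant of fixing $u_0=a^k$ so that only the recoverable factors $u_1,\dots,u_{2^k-1}$ carry the incompressible information) only makes more precise what the paper states in one line.
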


\begin{sloppypar}
\begin{proof}
In contrast to the assertion, we assume that some $s(n)\htrans$
\mbox{$T=\langle Q,\{a,b,0,1\},\Delta,q_0,\rightend,\delta,F \rangle$}
with $s(n) \in o(n)$ accepts~$D$.
We choose an integer $k\geq 2$ and a word $u \in \{a,b\}^*$ of length
$k2^k$ satisfying $C(u)\geq |u|$. Now, $u$ is split into
$2^k$ factors $u_0,u_1,\dots, u_{2^k-1}$ of length $k$.
Then, we choose an arbitrary factor $u_i$ and
consider the accepting computation of $T$ on
$$a^kb^{2^k} \bin_k(0)u_0 \bin_k(1)u_1 \cdots \bin_k(2^k-1)u_{2^k-1}
\bin_k(i)u_{i}.$$ We are going to show that $u$
can be compressed via $T$.

\medskip
A program~$P$ reconstructs $u$
from a description of $T$, the number $k$, and the sequence
of the $o(n)$ many states $q_1,q_2,\dots, q_r$ in which
$T$ reads the first symbol of the suffix $\bin_k(i)u_{i}$ as follows.
Since~$T$ is deterministic, this sequence of states
is the same for any suffix $\bin_k(i)u_{i}$, $0\leq i\leq 2^k-1$.
The program~$P$ takes the length $k$ and performs, for all
$\bin_k(i)$, $0\leq i\leq 2^k-1$, the following. It
enumerates the words $v\in\{a,b\}^+$ of length $k$.
For each $v$,
it simulates $o(n)$ successive partial sweeps of $T$
on~$\bin_k(i) v$, where the $j$th sweep is started in state $q_j$,
for $1\leq j\leq r$. If the simulation ends accepting,
we know that~$v$ is the $i$th factor $u_i$ of $u$.
In this way, all factors of $u$ and, thus, $u$ are determined.

\medskip
Let us consider the Kolmogorov complexity of~$u$. Let $|T|$ denote the constant size
of the description of $T$, and $|P|$ denote the constant size of the
program~$P$ itself. The binary description of $k$ takes
$O(\lg(|k|))$ bits. Each state of $T$ can be encoded by
$O(\lg(|Q|))$ bits.
So, we have
$$
C(u) \in |P|+|T|+O(\lg (k))+o(n)\cdot O(\lg(|Q|))
= O(\lg (k))+o(n).
$$
The length $n$ of the input  is
$k+2^k+2|u|+2k$. Since $|u|=k2^k$, we have $n\in \Theta(k2^k)$.
So, we can conclude that  $C(u) \in O(\lg (k))+o(|u|)=o(|u|)$.
This contradicts our initial assumption $C(u)\ge|u|$. So,
$T$ cannot accept $D$ with sweep complexity~$o(n)$.
\end{proof}
\end{sloppypar}

%%%%%%%%%%%%%%%%%%%%%%%%%%%%%%%%%%%%%%%%%%%%%%%%%%%%%%%%%%%%%%%%%%%%%%


\begin{thebibliography}{10}
\providecommand{\url}[1]{\texttt{#1}}
\providecommand{\urlprefix}{URL }
\expandafter\ifx\csname urlstyle\endcsname\relax
  \providecommand{\doi}[1]{doi:\discretionary{}{}{}#1}\else
  \providecommand{\doi}{doi:\discretionary{}{}{}\begingroup
  \urlstyle{rm}\Url}\fi
\providecommand{\eprint}[2][]{\url{#2}}

\bibitem{kutrib:2020:dniufstcdp:proc}
Kutrib M, Malcher A, Mereghetti C, Palano B.
\newblock Deterministic and Nondeterministic Iterated Uniform Finite-State
  Transducers: Computational and Descriptional Power.
\newblock In: Computability in Europe (CiE 2020), volume 12098 of \emph{LNCS}.
  Springer, 2020 pp. 87--99.  doi.:10.1007/978-3-030-51466-2\_8.

\bibitem{KMMP19}
Kutrib M, Malcher A, Mereghetti C, Palano B.
\newblock Descriptional Complexity of Iterated Uniform Finite-State
  Transducers.
\newblock In: Hospod{\'{a}}r M, Jir{\'{a}}skov{\'{a}} G, Konstantinidis S
  (eds.), Descriptional Complexity of Formal Systems ({DCFS} 2019), volume
  11612 of \emph{LNCS}. Springer, 2019 pp. 223--234.
  doi:10.1007/978-3-030-23247-4\_17.

\bibitem{kutrib:2022:dciufst}
Kutrib M, Malcher A, Mereghetti C, Palano B.
\newblock Descriptional Complexity of Iterated Uniform Finite-State
  Transducers.
\newblock \emph{Inf. Comput.}, 2022.
\newblock \textbf{284}:104691.
doi:10.1016/j.ic.2021.104691.

\bibitem{Friburger:fstcteneit:2004}
Friburger N, Maurel D.
\newblock Finite-State Transducer Cascades to Extract Named Entities in Texts.
\newblock \emph{Theor. Comput. Sci.}, 2004.
\newblock \textbf{313}(1):93--104.
doi:10.1016/j.tcs.2003.10.007.

\bibitem{Ginzburg:atoa:1968}
Ginzburg A.
\newblock Algebraic Theory of Automata.
\newblock Academic Press, 1968.

\bibitem{Hartmanis:1966:astosm:book}
Hartmanis J, Stearns RE.
\newblock Algebraic Structure Theory of Sequential Machines.
\newblock Prentice-Hall, 1966.

\bibitem{Citrini:odmpa:1986}
Citrini C, Crespi{-}Reghizzi S, Mandrioli D.
\newblock On Deterministic Multi-Pass Analysis.
\newblock \emph{{SIAM} J. Comput.}, 1986.
\newblock \textbf{15}(3):668--693.
doi:10.1137/0215049.

\bibitem{Bordihn:istalgd}
Bordihn H, Fernau H, Holzer M, Manca V, Mart{\'{\i}}n{-}Vide C.
\newblock Iterated Sequential Transducers as Language Generating Devices.
\newblock \emph{Theor. Comput. Sci.}, 2006.
\newblock \textbf{369}(1-3):67--81.
doi:10.1016/j.tcs.2006.07.059.

\bibitem{manca:2001:gpit}
Manca V.
\newblock On the Generative Power of Iterated Transductions.
\newblock In: Words, Semigroups, and Transductions -- Festschrift in Honor of
  Gabriel Thierrin. World Scientific, 2001 pp. 315--327.
doi:10.1142/ 9789812810908\_0024.

\bibitem{pierce:2011:dpoilpt}
Pierce A.
\newblock Decision Problems on Iterated Length-Preserving Transducers.
\newblock Bachelor's thesis, SCS Carnegie Mellon University, Pittsburgh, 2011.

\bibitem{bednarova:2012:scbochdpda}
Bedn{\'a}rov{\'a} Z, Geffert V, Mereghetti C, Palano B.
\newblock The Size-Cost of {B}oolean Operations on Constant Height
  Deterministic Pushdown Automata.
\newblock \emph{Theoret. Comput. Sci.}, 2012.
\newblock \textbf{449}:23--36. doi:10.1016/ j.tcs.2012.05.009.

\bibitem{BMP10}
Bertoni A, Mereghetti C, Palano B.
\newblock Trace Monoids with Idempotent Generators and Measure-Only Quantum  Automata.
\newblock \emph{Natural Computing}, 2010.
\newblock \textbf{9}(2):383--395.
doi:10.1007/s11047-009-9154-8.

\bibitem{BMP17}
Bianchi MP, Mereghetti C, Palano B.
\newblock Quantum Finite Automata: Advances on Bertoni's Ideas.
\newblock \emph{Theoret. Comput. Sci.}, 2017.
\newblock \textbf{664}:39--53.
doi:10.1016/j.tcs.2016.01.045.

\bibitem{BGMP13}
Bedn{\'{a}}rov{\'{a}} Z, Geffert V, Mereghetti C, Palano B.
\newblock Boolean Language Operations on Nondeterministic Automata with a
  Pushdown of Constant Height.
\newblock In: Bulatov A, Shur A (eds.), Computer Science in Russia ({CSR}
  2013), volume 7913 of \emph{LNCS}. Springer, 2013 pp. 100--111.
  doi:10.1007/978-3-642-38536-0\_9.

\bibitem{DBLP:journals/jcss/BednarovaGMP17}
Bedn{\'{a}}rov{\'{a}} Z, Geffert V, Mereghetti C, Palano B.
\newblock Boolean Language Operations on Nondeterministic Automata with a
  Pushdown of Constant Height.
\newblock \emph{J. Comput. Syst. Sci.}, 2017.
\newblock \textbf{90}:99--114.
doi:10.1016/ j.jcss.2017.06.007.

\bibitem{JMMP13}
Jakobi S, Meckel K, Mereghetti C, Palano B.
\newblock Queue Automata of Constant Length.
\newblock In: J{\"{u}}rgensen H, Reis R (eds.), Descriptional Complexity of
  Formal Systems ({DCFS} 2013), volume 8031 of \emph{LNCS}. Springer, 2013 pp.
  124--135.
  doi:10.1007/978-3-642-39310-5\_13.

\bibitem{holzer:2010:dcais}
Holzer M, Kutrib M.
\newblock Descriptional Complexity -- {A}n Introductory Survey.
\newblock In: Mart\'{\i}n-Vide C (ed.), Scientific Applications of Language
  Methods, pp. 1--58. Imperial College Press, 2010.
doi:10.1142/ 9781848165458\_0001.

\bibitem{Mealy:1955:amfssc}
Mealy GH.
\newblock A Method for Synthesizing Sequential Circuits.
\newblock \emph{Bell System Tech. J.}, 1955.
\newblock \textbf{34}:1045--1079.
 doi:10.1002/j.1538-7305.1955.tb03788.x.

\bibitem{mereghetti:2008:tdcpstm}
Mereghetti C.
\newblock Testing the Descriptional Power of Small {T}uring Machines on
  Nonregular Language Acceptance.
\newblock \emph{Int. J. Found. Comput. Sci.}, 2008.
\newblock \textbf{19}:827--843.
doi:10.1142/S012905410800598X.

\bibitem{Hopcroft:1979:itatlc:book}
Hopcroft JE, Ullman JD.
\newblock Introduction to Automata Theory, Languages, and Computation.
\newblock Addison-Wes\-ley, 1979.

\bibitem{malcher:2012:dctwpdarhr}
Malcher A, Mereghetti C, Palano B.
\newblock Descriptional Complexity of Two-Way Pushdown Automata with Restricted
  Head Reversals.
\newblock \emph{Theoret. Comput. Sci.}, 2012.
\newblock \textbf{449}:119--133.
doi:10.1016/j.tcs.2012.04.007.

\bibitem{Jones:1975:sbracp}
Jones ND.
\newblock Space-Bounded Reducibility among Combinatorial Problems.
\newblock \emph{J. Comput. System Sci.}, 1975.
\newblock \textbf{11}:68--85.
doi:10.1016/S0022-0000(75)80050-X.

\bibitem{garey:1979:npbook}
Garey MR, Johnson DS.
\newblock Computers and Intractability: {A} Guide to the Theory of
  NP-Completeness.
\newblock W. H. Freeman, 1979.
\newblock ISBN:0-7167-1044-7.

\bibitem{holzer:2011:daccofaas}
Holzer M, Kutrib M.
\newblock Descriptional and Computational Complexity of Finite Automata - {A}
  Survey.
\newblock \emph{Inf. Comput.}, 2011.
\newblock \textbf{209}(3):456--470.
doi:10.1016/j.ic.2010.11.013.

\bibitem{li:1993:itkca:book}
Li M, Vit{\'a}nyi PMB.
\newblock An Introduction to {K}olmogorov Complexity and Its Applications.
\newblock Springer, 3rd~edi\-tion, 2008.

\bibitem{Hartmanis:1968:ccottmc}
Hartmanis J.
\newblock Computational Complexity of One-Tape {T}uring Machine Computations.
\newblock \emph{J. ACM}, 1968.
\newblock \textbf{15}(2):325--339.
doi:10.1145/321450.321464.
\end{thebibliography}
\end{document}